\newcommand{\dsp}{\displaystyle}
\newcommand{\Rbb}{\mathbb{R}}
\newcommand{\Cbb}{\mathbb{C}}
\newcommand{\repa}{\ensuremath{\Re\mathrm{e}}}
\newcommand{\adj}{\ensuremath{\mathrm{adj}}}
\newcommand{\res}{\ensuremath{\mathrm{res}}}
\newcommand{\spec}{\ensuremath{\mathrm{sp}}}
\newcommand{\spanvec}{\ensuremath{\mathrm{span}}}
\newcommand{\lap}{\mathcal{L}}
\newcommand{\mero}{\mathcal{M}}
\newcommand{\Cfrak}{\mathscr{C}}
\newcommand{\Mfrak}{\mathscr{M}}
\newcommand{\Pfrak}{\mathscr{P}}
\newcommand{\Qfrak}{\mathscr{Q}}
\newcommand{\Wfrak}{\mathscr{W}}
\newcommand{\xa}{x_a}
\newcommand{\xb}{x_b}
\newcommand{\xc}{x_c}
\newcommand{\xd}{x_d}
\newcommand{\xe}{x_e}
\newcommand{\xap}{x^+_a}
\newcommand{\xbp}{x^+_b}
\newcommand{\xcp}{x^+_c}
\newcommand{\xdp}{x^+_d}
\newcommand{\xep}{x^+_e}
\newcommand{\xam}{x^-_a}
\newcommand{\xbm}{x^-_b}
\newcommand{\xcm}{x^-_c}
\newcommand{\xdm}{x^-_d}
\newcommand{\xem}{x^-_e}
\newcommand{\xapm}{x^\pm_a}
\newcommand{\xbpm}{x^\pm_b}
\newcommand{\xcpm}{x^\pm_c}
\newcommand{\xdpm}{x^\pm_d}
\newcommand{\xepm}{x^\pm_e}
\newcommand{\xamp}{x^\mp_a}
\newcommand{\eone}{e_1}
\newcommand{\ep}{e_p}
\newcommand{\diffC}{\dot C}
\newcommand{\tC}{\tilde{C}}
\newcommand{\intintonen}{\{1,\dots,n \}}
\newcommand{\intintonentwo}{\{1,\dots,n^2 \}}
\newcommand{\ppmod}[1]{{\!\!\!\!\pmod{#1}}}
\newtheorem{theorem}{Theorem}[section]
\newtheorem{lemma}[theorem]{Lemma}
\newtheorem{corollary}[theorem]{Corollary}
\newtheorem{definition}[theorem]{Definition}
\newtheorem{remark}[theorem]{Remark}
\newcounter{parentequation}
  \def\ignorespacesafterend{\global\@ignoretrue}%
\newenvironment{subequations}{%
  \refstepcounter{equation}%
  \protected@edef\theparentequation{\theequation}%
  \setcounter{parentequation}{\value{equation}}%
  \setcounter{equation}{0}%
  \def\theequation{\theparentequation\alph{equation}}%
  \ignorespaces
}{%
  \setcounter{equation}{\value{parentequation}}%
  \ignorespacesafterend
}
\newcommand{\be}{\begin{equation}}
\newcommand{\ee}{\end{equation}}
\newcommand{\bes}{\begin{equation*}}
\newcommand{\ees}{\end{equation*}}
\newcommand{\bea}{\begin{eqnarray}}
\newcommand{\eea}{\end{eqnarray}}
\newcommand{\beas}{\begin{eqnarray*}}
\newcommand{\eeas}{\end{eqnarray*}}
\newcommand{\bse}{\begin{subequations}}
\newcommand{\ese}{\end{subequations}}
\begin{document}
\title[Compartmental analysis identifiability]{Compartmental analysis of dynamic
nuclear medicine data: models and identifiability}
\author{Fabrice Delbary$^1$, Sara Garbarino$^2$ and Valentina Vivaldi$^2$}

\address{$^1$ Institut f\"{u}r Mathematik, Johannes Gutenberg-Universit\"at Mainz \\
$^2$ Dipartimento di Matematica and CNR-SPIN, Genova}

\begin{abstract}
Compartmental models based on tracer mass balance are extensively used in clinical and pre-clinical nuclear medicine in order to obtain quantitative information on tracer metabolism in the biological tissue. This paper is the first of a series of two that deal with the problem of tracer coefficient estimation via compartmental modelling in an inverse problem framework. Specifically, here we discuss the identifiability problem for a general n-dimension compartmental system and provide uniqueness results in the case of two-compartment and three-compartment compartmental models. The second paper will utilize this framework in order to show how non-linear regularization schemes can be applied to obtain numerical estimates of the tracer coefficients in the case of nuclear medicine data corresponding to brain, liver and kidney physiology.
\end{abstract}
%
%
%
%
%

\section{Introduction}

Nuclear medicine imaging is a class of functional imaging modality that utilizes radioactive tracers to investigate specific physiological processes. 
Such tracers are in general short-lived isotopes that are injected in the subject's blood and linked to chemical compounds whose metabolism is highly significant to understand the function or malfunction of an organ. Positron 
Emission Tomography (PET) \cite{oletal97} is the most modern nuclear medicine technique, utilizing isotopes produced in a cyclotron and providing dynamical images of its metabolism-based accumulation in the tissues. While decaying, the isotope emits positrons that annihilate with the electrons of the tissue thus emitting two collimated gamma rays. These rays are detected by the PET collimators to provide a rather precise indication of their temporal and spatial origin. Applications of PET in the clinical workflow depend on the kind of tracer employed and on the kind of metabolism that such tracer is able to involve. For example, in oncological applications, $[^{18}F]$FDG \cite{adetal98,dietal98,koetal12} and $FMISO$ \cite{waetal09} are the most commonly used tracers; neuroimaging studies of Alzheimer disease utilizes 
$^{11}C$ and $^{15}O$ \cite{zhetal07}, while 
for myocardium perfusion analysis, $^{82}Rb$, $[^{18}F]$FDG  and $H_2$$^{15}O$ are the most extensively used tracers \cite{baetal06, beetal84, rietal88, scetal02}.
\par
From a computational viewpoint, PET experiments involve two kinds of inverse problems. 
In the first one, image reconstruction techniques are applied to reconstruct the 
spatiotemporal location of tracer concentration from the radioactivity measured 
by the detectors \cite{natterer01,huetal94,shetal82}.
The second problem utilizes these reconstructed dynamic PET data to estimate 
physiological parameters that describe the functional behaviour 
of the inspected tissues and therefore the flow of tracer between their 
different constituents. 
It is also possible to solve the full inverse problem of retrieving the compartment modelling coefficients straight from dynamic PET data. This problem has become increasingly interesting in recent years, and it is typically referred to as a \emph{direct} reconstruction problem \cite{beetal08, beetal10, reetal06, reetal07, yaetal08}; the one of splitting into two different separate problems is, by symmetry, referred to as \emph{indirect}.
The present paper is the first one of a series of two 
that focus on the indirect reconstruction problem and aim to describe it  
in an ill-posed inverse problems framework.

\par
Models in pharmacokinetics 
\cite{rescigno97, rescigno09} typically assume that in the organ under investigation there co-exist functionally separated pools of tracer, named {\em{compartments}}, that can exchange tracer between each other. With the help of the global observation of the organ along time provided by reconstructed PET images, {\em{compartmental analysis}} \cite{evetal05, gaetal13, gaetal14, gaetal15, guetal01, guetal02} aims at retrieving information on the radioactive tracer exchange rates between compartments. From a mathematical viewpoint, the time dependent concentrations of tracer in each compartment constitutes the state variables that can be determined from PET data and the time evolution of the state variables can be modelled by a system of differential equations for the concentrations, expressing the principle of tracer balance during exchange processes. Assuming that the exchange rates are time independent, and neglecting the spatial exchanges between compartment, the mathematical model for the compartmental problem becomes a linear system of Ordinary Differential Equations (ODEs) with constant coefficients. Although it is certainly possible to take into account macroscopic flow conditions (as particularly useful for modelling cardiac perfusion, for instance) and introduce a PDE-based framework, as in \cite{reetal14}, in this paper we will focus on discussing results for compartmental modelling under the standard and simplifying conditions of time independence of parameters and no spatial exchanges between compartment \cite{weaa04}.

This paper describes the analytical properties of this forward problem in the case of the 
general $n$-compartment system and in the more specific (but highly realistic) 
cases of the two-compartment and three-compartment catenary systems 
(where a catenary system refers to system made of chain of compartments, each one connected only to its immediate predecessor and successor in the chain). For all 
these compartmental problems, the constant coefficients describe the 
input/output rate of tracer for each compartment and represent the physiological 
parameters assessing the system's metabolism. Therefore such coefficients are 
the unknowns to be estimated in compartmental analysis inverse problem. 
Some results \cite{meshkat14} on the identifiability of these coefficients have been obtained recently, for a very specific class of compartment models, by means of graph analysis techniques and a reparametrizations procedure. In this paper, we will focus on an inverse problem approach for the identifiability, and provide the first general discussion of uniqueness for this inverse problem by proving some identifiability results in the case of the two-compartment and three-compartment catenary models. 
In a future paper we will provide a general 
scheme for the numerical solution of these inverse problems and apply it against 
both synthetic data and experimental measurements acquired by means of a PET 
system for small animals.

\par
The plan of the present paper is as follows. Section 2 will introduce the 
formalism for the most general $n$-th compartment model and for its 
specialization to the catenary case. Section 3 and Section 4 will provide the 
identifiability results for the two-compartment and three-compartment catenary 
models. Our conclusions will be offered in Section 5.

\section{The $n$-compartment systems}
\begin{figure}[H]
\begin{center}
\includegraphics[scale=0.625]{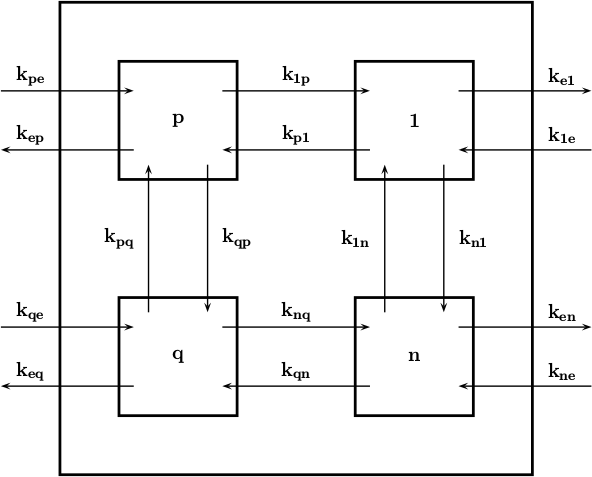}
\caption{Generic $n$-compartment system; compartments are labelled $p\in\intintonen$ and for each $p,q\in\intintonen$, $k_{pq}$ is the constant, non-negative rate of tracer exchange between compartment $p$ and $q$. The values $k_{ep}$ and $k_{pe}$ denotes the rates at which the tracer is received in the system at compartment $p$ and excreted by the system at compartment $p$, respectively.}\label{n-compartment-model}
\end{center}
\end{figure}
\hspace{6mm}
We start our analysis by considering the very general $n$-compartment model depicted in Figure 
\ref{n-compartment-model}. In all nuclear medicine modalities, the tracer 
injected into the body may assume different metabolic status. Since the spatial 
resolution provided by these imaging modalities is not sufficient to 
anatomically distinguish these different status, they are modelled as 
compartments, each one just playing a functional role that describes a specific 
tracer metabolic condition. We denote with $C_p$ the non-negative concentration 
function of the tracer in the compartment $p\in\intintonen$. We also assume that 
the compartment $p$ receives the radioactive tracer from outside the 
compartmental system with a concentration function $C_{p\mathrm{e}}$ and at a 
constant non-negative rate $k_{p\mathrm{e}}$ and excretes the tracer from inside 
at a constant non-negative rate $k_{\mathrm{e}p}$.  Further, the constant 
non-negative rate at which the compartment $p$ receives the tracer from a 
compartment $q\neq p$ is denoted with $k_{pq}$ and, finally, all input 
concentration functions $(C_{p\mathrm{e}})_{p\in\intintonen}$ are supposed to be 
non-negative and continuous. Then the evolution of the tracer concentrations in 
each compartment is governed by the following linear system of Ordinary 
Differential Equations (ODE) with constant coefficients
\bes
\diffC_p=\sum_{q=1}^n k_{pq}C_q+k_{p\mathrm{e}}C_{p\mathrm{e}},\qquad 
p\in\intintonen,
\ees
with the initial conditions
\be
C_p(0)=0,\qquad p\in\intintonen,
\ee
where, for $p\in\intintonen$, $k_{pp}=-\left(\sum_{q\neq 
p}k_{qp}+k_{\mathrm{e}p}\right)$. Dependence on $t$ is omitted but 
implied. That is
\be\label{veceqC}
\diffC=MC+W,\qquad C(0)=0, 
\ee
where
\be\label{vecCvecrhs}
C=\pmatrix{C_1\cr
\vdots\cr
C_n\cr},
\qquad
W=\pmatrix{k_{1\mathrm{e}}C_{1\mathrm{e}}\cr
\vdots\cr
k_{n\mathrm{e}}C_{n\mathrm{e}}\cr},
\ee
and the matrix $M$ is given by
\be\label{sysmat}
M_{pq}=k_{pq},\qquad p,q\in\intintonen.
\ee
PET-scan images allow to measure the total amount of radioactive tracer in the tissue or organ of interest, which is, in turn, modelled by the compartmental system. We assume that each compartment contributes to the intensity of the PET image linearly with respect to the amount of tracer in the compartment. Hence PET-scan data give access to the data $\tC$ such that the measurements and the model are related by the functional equation
\be\label{eqCtilde}
\tC(t)=\alpha^T C(t),\qquad t\in\Rbb_+,
\ee
where $\alpha\in{\Rbb_+^*}^n$ is assumed to be a known constant vector, representing the blood fraction in the tissue or organ of interest. Of course, the capability to reliably measure the blood fraction is a key point in compartmental analysis. We are aware that there are methods to measure it (as, for instance, in \cite{itetal01}) or to estimate it, as another model parameter in the compartmental model fitting (for instance using Levenberg-Marquardt as in \cite{waetal09}, or Maximum-Likelihood as in \cite{gaetal14}). For sake of simplicity in this paper, we assume it to be known \emph{a priori}, measured in advance by some experimental procedure, or fixed as the standard value retrieved in literature.

The general $n$-compartmental inverse problem is the 
one of recovering the exchange rates 
$\underline{K}\in\Rbb^{n^2+n}$, where
\bes
\underline{K}=\cases{k_{\mathrm{e}p},&$p\in\intintonentwo,\;p\equiv1\ppmod{n+1
}$,\\
k_{p-n\left\lfloor\frac{p-1}{n}\right\rfloor,1+\left\lfloor\frac{p-1}{n}
\right\rfloor},&$p\in\intintonentwo,\;p\not\equiv1\ppmod{n+1}$,\\
k_{(p-n^2)\mathrm{e}}&$p\in\{n^2+1,\ldots,n^2+n\}$,}
\ees
using measurements of $\tC$. 

\subsection{Properties of $n$-compartment systems}\label{genprop}
In this subsection, we propose to recall some noteworthy properties of $n$-compartment systems. In the following of the document, for a positive integer $n$ and $K\in\Rbb^{n^2+n}$, we denote by $\hat{K}\in\Rbb^{n^2}$ the first $n^2$ components of $K$ and $\check{K}\in\Rbb^n$ the last $n$ components of $K$. For a positive integer $n$, we denote by $\Mfrak$ the following linear 
operator
\be\label{Mfrak}
\begin{array}{rcl}
\Mfrak:\Rbb^{n^2}&\to&M_n(\Rbb)\\\
H&\mapsto&\Mfrak(H),
\end{array}
\ee
where for all $H\in\Rbb^{n^2}$
\bes
\fl\Mfrak(H)_{pq}=\cases{-H_{1+(n+1)(p-1)}-\sum_{\scriptsize\begin{array}{l}
p'=1\\p'\neq p\end{array}}^nH_{p+n(p'-1)},&$p,q\in\intintonen,\; p=q$,\\
H_{p+n(q-1)},&$p,q\in\intintonen,\;p\neq q$,}
\ees
so that for all $H\in\Rbb_+^{n^2}$, $\Mfrak(H)$ is the matrix defined in 
\eref{sysmat} for the parameters $H$.
First of all, for general $n$-compartment systems, we have the following 
theorem \cite{hearon63}
\begin{theorem}\label{hearonth1}
Consider $H\in\Rbb_+^{n^2}$. Then the eigenvalues of the matrix $M=\Mfrak(H)$ 
as defined in \eref{Mfrak} have a non-positive real part and if an eigenvalue 
has a zero real part, then the eigenvalue is $0$, moreover, $\dim(\ker(M))=m_0$ 
where $m_0$ is the multiplicity of $0$. In addition, the solution $C$ to
\be\label{solgeneq}
\dot{C}=MC+W,\qquad C(0)=C_0, 
\ee
where $C_0\in\Rbb_+^n$ and $W:\Rbb_+\to\Rbb_+^n$, verifies 
$C_p(t)\geq0$ for all $p\in\intintonen$ and $t\in\Rbb_+$.
\end{theorem}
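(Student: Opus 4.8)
The plan is to establish the three assertions separately, exploiting the compartmental structure of $M=\Mfrak(H)$: by tracer mass balance its off-diagonal entries $M_{pq}=k_{pq}$ (for $p\neq q$) are non-negative, while each column sums to $\sum_{p=1}^n M_{pq}=-k_{\mathrm{e}q}\leq0$. First I would localise the spectrum with Gershgorin's theorem applied columnwise to $M$ (equivalently rowwise to $M^T$, which shares its eigenvalues). The disk associated with column $q$ is centred at the real point $M_{qq}\leq0$ and has radius $\sum_{p\neq q}|M_{pq}|=\sum_{p\neq q}k_{pq}$, so its rightmost point is $M_{qq}+\sum_{p\neq q}k_{pq}=-k_{\mathrm{e}q}\leq0$. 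Hence any eigenvalue $\lambda=x+iy$ contained in this disk obeys $x\leq -k_{\mathrm{e}q}\leq 0$, which gives $\repa(\lambda)\leq0$. If moreover $x=0$, the same estimate forces $k_{\mathrm{e}q}=0$ together with $\lambda$ sitting exactly at the rightmost, real point of the disk, so $y=0$ and $\lambda=0$; this is the claim that the only eigenvalue on the imaginary axis is $0$.

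Next I would prove the positivity statement by showing that the Metzler matrix $M$ generates a positive semigroup. Picking $\mu\geq\max_p|M_{pp}|$ renders $M+\mu I$ entrywise non-negative, whence $e^{t(M+\mu I)}=\sum_{k\geq0}t^k(M+\mu I)^k/k!\geq0$ for every $t\geq0$, and therefore $e^{tM}=e^{-\mu t}e^{t(M+\mu I)}\geq0$ entrywise. Inserting this into the variation-of-constants representation
\be
C(t)=e^{tM}C_0+\int_0^t e^{(t-s)M}W(s)\,ds,
\ee
and using $C_0\in\Rbb_+^n$ together with $W(s)\in\Rbb_+^n$, I conclude that both terms are non-negative and hence $C_p(t)\geq0$ for all $p\in\intintonen$ and $t\in\Rbb_+$.

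It remains to show $\dim(\ker M)=m_0$, i.e. that the zero eigenvalue is semisimple, and here I would argue through boundedness of the semigroup. For the homogeneous problem ($W\equiv0$) started at $C_0\in\Rbb_+^n$, the positivity just proved gives $C(t)\geq0$, while the total mass $S(t)=\mathbf{1}^TC(t)$ satisfies $\dot S=\mathbf{1}^TMC=-\sum_q k_{\mathrm{e}q}C_q(t)\leq0$; thus $0\leq C_p(t)\leq S(t)\leq S(0)$ and the trajectory is bounded. Decomposing an arbitrary $C_0=C_0^+-C_0^-$ into non-negative parts and invoking linearity shows $t\mapsto e^{tM}C_0$ is bounded for every $C_0$, so $\sup_{t\geq0}\|e^{tM}\|<\infty$. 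Since, by the first step, every eigenvalue either has strictly negative real part or equals $0$ (the sole eigenvalue on the imaginary axis), a Jordan block of size at least two at $0$ would contribute a component growing like a positive power of $t$, contradicting this bound. Hence $0$ is semisimple and its algebraic multiplicity $m_0$ coincides with $\dim(\ker M)$.

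The main obstacle is precisely this last assertion: spectral localisation alone leaves open a nontrivial Jordan structure at the origin, and the most transparent way I see to exclude it is the dynamical boundedness argument above, which has to borrow both the positivity of $e^{tM}$ and the mass-dissipation identity already obtained rather than relying on eigenvalue estimates alone.
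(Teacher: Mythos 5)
Your proof is correct, but note that the paper does not actually prove Theorem \ref{hearonth1}: the result is quoted from Hearon \cite{hearon63}, and the only justification given is the remark that follows the theorem, which attributes the spectral facts to conservation of mass (solutions are bounded, oscillations are damped) and the positivity of $C$ to the physical meaning of concentrations. Your argument is a rigorous implementation of exactly that sketch, and it matches the classical route. Columnwise Gershgorin uses the structure $k_{pp}=-\bigl(\sum_{q\neq p}k_{qp}+k_{\mathrm{e}p}\bigr)$ from Section 2, so each column of $M=\Mfrak(H)$ has non-negative off-diagonal entries and sums to $-k_{\mathrm{e}q}\leq0$; the disk estimate $|iy-M_{qq}|\leq\sum_{p\neq q}k_{pq}$ with $M_{qq}=-\bigl(k_{\mathrm{e}q}+\sum_{p\neq q}k_{pq}\bigr)$ then forces both $k_{\mathrm{e}q}=0$ and $y=0$, which is the computation your phrase ``sitting exactly at the rightmost, real point'' compresses, and you should spell it out. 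The Metzler shift $M+\mu I\geq0$ giving $e^{tM}\geq0$ entrywise, followed by Duhamel, is the standard positivity proof and is sound (one implicit hypothesis: the Duhamel representation needs some regularity of $W$, and continuity suffices, which is what the paper assumes since the inputs $C_{p\mathrm{e}}$ are continuous and the rates constant). Your final step is the right one and correctly ordered: the dissipation $\frac{d}{dt}\mathbf{1}^TC=-\sum_q k_{\mathrm{e}q}C_q\leq0$ is only usable after positivity of the trajectory is known, boundedness of $e^{tM}$ follows by splitting arbitrary data into positive and negative parts, and a nontrivial Jordan block at $0$ would make $e^{tM}$ grow like a positive power of $t$ on the corresponding generalized eigenspace, which the spectral localisation (no other eigenvalues on the imaginary axis) turns into a genuine contradiction. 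What your write-up buys over the paper is self-containedness: the theorem no longer rests on the external citation, at the modest cost of about a page of elementary linear algebra and semigroup estimates.
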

\begin{remark}
As remarked in \cite{hearon63}, it is the principle of the conservation of mass applied to the system \eref{solgeneq} where $W=0$, which insures that: the eigenvalues of $M$ have a non-positive real part; the multiplicity of the eigenvalue $0$ is the dimension of the null--space of $M$ (the solutions are bounded); and the only possible eigenvalue with a zero real part is $0$ (if oscillations occur, then they are damped). The positiveness of $C$ is simply the fact that the concentrations are positive quantities.
\end{remark}
Some additional properties on the system matrix $M$ lead to more restrictions 
on its eigenvalues. In particular, we have \cite{hearon63}
\begin{theorem}\label{hearonth2}
Consider $H\in\Rbb_+^{n^2}$ and denote by $M=\Mfrak(H)$. If $M$ is irreducible, 
then $0$ is an eigenvalue of $M$ if and only if $k_{\mathrm{e}p}=0$ for 
$p\in\intintonen$, where for $p\in\intintonen$, 
$k_{\mathrm{e}p}=H_{1+(n+1)(p-1)}$ (no excretion). In other terms, $0$ is an 
eigenvalue of $M$ if and only if the $n$-compartment system of exchange rates 
$H\in\Rbb_+^{n^2}$ and without input is closed. Moreover, in that case, $0$ is 
a simple eigenvalue.
\end{theorem}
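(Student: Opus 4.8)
The plan is to reduce everything to the Perron--Frobenius theorem, using Theorem~\ref{hearonth1} to supply the spectral half-plane bound for free. First I would record the two structural features of $M=\Mfrak(H)$ that drive the proof. Since $H\in\Rbb_+^{n^2}$, every off-diagonal entry $M_{pq}$ ($p\neq q$) is non-negative, so $M$ is a Metzler matrix; and from the definition of the diagonal term in \eref{Mfrak} the $p$-th row of $M$ sums to $\sum_{q=1}^n M_{pq}=-H_{1+(n+1)(p-1)}=-k_{\mathrm{e}p}\le 0$, the off-diagonal row entries cancelling the matching part of the diagonal. Equivalently, writing $\mathbf 1=(1,\dots,1)^T$, we have $M\mathbf 1=(-k_{\mathrm{e}1},\dots,-k_{\mathrm{e}n})^T$. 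This identity is the only bridge needed between the spectrum of $M$ and the excretion rates.

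Next I would translate $M$ into a non-negative matrix. For $\lambda\ge\max_p|M_{pp}|$ the matrix $B:=M+\lambda I$ is entrywise non-negative, and since the shift by $\lambda I$ leaves the off-diagonal pattern untouched, $B$ is irreducible precisely because $M$ is. The Perron--Frobenius theorem then gives that the spectral radius $\rho(B)$ is a simple eigenvalue of $B$ with strictly positive right and left eigenvectors, and that every other eigenvalue $\mu$ satisfies $\repa(\mu)\le|\mu|\le\rho(B)$, with $\repa(\mu)=\rho(B)$ only for $\mu=\rho(B)$. Undoing the shift, $\lambda_{\max}:=\rho(B)-\lambda$ is a real, simple eigenvalue of $M$, it is the unique eigenvalue of maximal real part, and it carries strictly positive right and left eigenvectors. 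By Theorem~\ref{hearonth1} all eigenvalues of $M$ have non-positive real part, so $\lambda_{\max}\le 0$.

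With this in hand the equivalence is immediate. If $k_{\mathrm{e}p}=0$ for every $p$ then $M\mathbf 1=0$, so $\mathbf 1\in\ker M$ and $0\in\spec(M)$. Conversely, suppose $0\in\spec(M)$; then $\lambda_{\max}\ge 0$, and combined with $\lambda_{\max}\le 0$ we get $\lambda_{\max}=0$. Let $w>0$ be the positive left eigenvector belonging to this dominant eigenvalue, so $w^T M=0$. Then $0=w^T M\mathbf 1=\sum_{p=1}^n w_p(-k_{\mathrm{e}p})=-\sum_{p=1}^n w_p k_{\mathrm{e}p}$, and since $w_p>0$ and $k_{\mathrm{e}p}\ge 0$ this forces $k_{\mathrm{e}p}=0$ for all $p$; the closedness reformulation is just the statement that no compartment excretes. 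Finally, simplicity comes for free: whenever $0$ is an eigenvalue it coincides with $\lambda_{\max}=\rho(B)-\lambda$, which Perron--Frobenius guarantees is simple (one may alternatively invoke Theorem~\ref{hearonth1} to reduce the algebraic multiplicity to $\dim\ker M$, and note that the Perron eigenvector spans this kernel).

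I expect the delicate point to be the Perron--Frobenius step, specifically the claim that the eigenvalue of maximal real part is unique and simple. For a merely irreducible (possibly periodic) non-negative matrix the peripheral spectrum on $|\mu|=\rho(B)$ need not be a singleton, so I must argue that only $\mu=\rho(B)$ attains the maximal real part, the remaining peripheral eigenvalues being $\rho(B)$ times nontrivial roots of unity and hence of strictly smaller real part. This is exactly where irreducibility is indispensable, as it is also what makes the dominant eigenvector strictly positive and thereby lets the sign argument conclude $k_{\mathrm{e}p}=0$; without it both the uniqueness of the dominant eigenvalue and the positivity of the null vector can fail, and with them the theorem.
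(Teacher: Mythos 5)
Your proof is correct, but note that there is no in-paper proof to compare it against: the paper states Theorem~\ref{hearonth2} as a recalled result, citing Hearon's 1963 paper, and never proves it. Your Perron--Frobenius argument is therefore a self-contained substitute, and it is sound: the shift $B=M+\lambda I$ preserves irreducibility, Perron--Frobenius gives a simple dominant eigenvalue $\rho(B)$ with strictly positive left and right eigenvectors, and your geometric observation (if $\repa(\mu)=\rho(B)$ and $|\mu|\le\rho(B)$ then $\mu=\rho(B)$) correctly disposes of the peripheral-spectrum worry you flag at the end, so the eigenvalue of maximal real part of $M$ is real, unique and simple. Combined with Theorem~\ref{hearonth1} (or, if you want to avoid invoking it, with the elementary bound $\rho(B)\le\max_p\sum_q B_{pq}\le\lambda$), the sign argument $0=w^TM\mathbf{1}=-\sum_p w_p k_{\mathrm{e}p}$ with $w>0$ closes both directions of the equivalence and yields simplicity at no extra cost. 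One cosmetic caveat: you take row sums of $M$ equal to $-k_{\mathrm{e}p}$, which is what the formula for $\Mfrak$ in \eref{Mfrak} literally gives, but this formula is inconsistent with \eref{sysmat} and $k_{pp}=-\bigl(\sum_{q\neq p}k_{qp}+k_{\mathrm{e}p}\bigr)$, under which it is the \emph{column} sums that equal $-k_{\mathrm{e}p}$ (the paper's $\Mfrak$ has an index typo); under the intended convention your argument survives verbatim after exchanging the roles of the left and right Perron vectors, i.e.\ using $\mathbf{1}^TM=-(k_{\mathrm{e}1},\dots,k_{\mathrm{e}n})$ and the positive right null vector. This is exactly the mass-conservation identity that drives the classical compartmental-matrix proofs, so your route is in the same spirit as Hearon's diagonal-dominance/irreducibility arguments while packaging uniqueness, simplicity and eigenvector positivity into a single application of Perron--Frobenius.
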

\begin{remark}
As remarked in \cite{hearon63}, a compartment system without input refers to a system in which all the external inputs are excluded (set to $0$). 
\end{remark}
\begin{theorem}\label{hearonth3}
Consider $H\in\Rbb_+^{n^2}$, denote by $M=\Mfrak(H)$ and suppose that there 
exist $(a_p)_{p\in\intintonen}$ in ${\Rbb_+^*}^n$ such that for all 
$p,q\in\intintonen,p\neq q$
\be\label{detbal}
k_{pq}a_q=k_{qp}a_p,
\ee
where for all $p,q\in\intintonen,p\neq q$, $k_{pq}=H_{p+n(q-1)}$. Then $M$ is 
diagonalizable and its eigenvalues are real and non-positive. 
\end{theorem}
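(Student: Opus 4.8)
The plan is to produce an explicit similarity transformation that conjugates $M$ into a real symmetric matrix; once this is achieved, diagonalizability and the reality of the spectrum follow from the spectral theorem, while non-positivity is inherited from Theorem \ref{hearonth1}. The detailed balance hypothesis \eref{detbal} is exactly the algebraic relation that makes such a symmetrization possible.

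First I would introduce the diagonal matrix $D=\mathrm{diag}(a_1,\dots,a_n)$ and its positive square root $D^{1/2}=\mathrm{diag}(\sqrt{a_1},\dots,\sqrt{a_n})$, which is invertible because every $a_p$ is strictly positive by hypothesis. I then form the conjugate $S=D^{-1/2}MD^{1/2}$ and compute its entries. For $p\neq q$ one finds $S_{pq}=k_{pq}\sqrt{a_q/a_p}$, so that $S_{pq}=S_{qp}$ is equivalent to $k_{pq}a_q=k_{qp}a_p$, which is precisely \eref{detbal}; the diagonal entries are left unchanged by the conjugation and are thus trivially symmetric. Hence $S$ is a real symmetric matrix similar to $M$.

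The spectral theorem now gives that $S$ is orthogonally diagonalizable with real eigenvalues. Since similarity preserves both diagonalizability and the eigenvalues, $M$ is diagonalizable and $\spec(M)=\spec(S)\subset\Rbb$. To conclude, Theorem \ref{hearonth1} ensures that every eigenvalue of $M=\Mfrak(H)$ has non-positive real part; combined with the reality just established, each eigenvalue coincides with its own real part and is therefore non-positive.

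I expect the only genuinely creative step to be guessing the correct conjugating matrix $D^{1/2}$: everything hinges on recognizing that detailed balance is the symmetrizability condition for $M$. After that insight the verification is a one-line substitution of \eref{detbal}, and the remainder is a routine appeal to the spectral theorem together with the previously established sign information on the spectrum.
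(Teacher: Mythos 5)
Your proof is correct: the paper states Theorem \ref{hearonth3} as a result recalled from \cite{hearon63} without reproducing a proof, and your symmetrization $S=D^{-1/2}MD^{1/2}$ is precisely the classical argument behind that citation, with the appeal to Theorem \ref{hearonth1} legitimately supplying the sign information. The key identification is exactly right --- $S_{pq}=k_{pq}\sqrt{a_q/a_p}$ is symmetric if and only if \eref{detbal} holds (the positivity of the $a_p$ makes $D^{1/2}$ invertible) --- and the spectral theorem plus similarity invariance of diagonalizability and spectra completes the argument with no gaps.
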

\begin{remark}
Note that the condition \eref{detbal} does not depend on the excretion rates 
$k_{\mathrm{e}p}=H_{1+(n+1)(p-1)}$ for $p\in\intintonen$. Further, as remarked in \cite{hearon63}, the equation \eref{detbal} is the principle of detailed balance \cite{hearon53, hearon63, onsager31} for the closed $n$-compartment system of exchange rates $(k_{pq})_{p,q\in\intintonen,p\neq q}$: at the equilibrium state, every process is balanced by its inverse.
\end{remark}
As defined in \cite{hearon63, parter60, schmidt99}, we recall that a connected compartmental system is a system for which it is possible for the tracer to reach every compartment from every other compartment, that a compartmental system with no cycle is a system for which it is not possible for the tracer to pass from a given compartment through two or more other compartments back to the starting compartment and that a compartmental system is sign-symmetric if the associated matrix M is sign-symmetric: $k_{pq}k_{qp}\geq 0$ for $p\neq q$ and $k_{pq}=0$ if and only if $k_{qp}=0$. The theorem \ref{hearonth3} applies in particular to sign-symmetric systems with no cycle. In case of connected sign-symmetric systems with no cycle, like the catenary compartmental system we will use in the following, we also have the following result \cite{parter60}.
\begin{theorem}\label{parterth}
Consider a sign-symmetric cycle-free connected $n$-compartment system. Then an 
eigenvalue $\lambda$ of the system matrix $M$ is multiple if and only if there 
exists a compartment $p$, directly connected to at least three other 
compartments $p_1,p_2,p_3$, such that $\lambda$ is an eigenvalue of the 
matrices 
$N_1,N_2,N_3$ of the respective connected subsystems containing $p_1,p_2,p_3$ 
and not $p$.
\end{theorem}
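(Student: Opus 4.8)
The plan is to reduce to a symmetric matrix and then run an induction on the number $n$ of compartments, with Cauchy interlacing as the quantitative engine. Since the system is sign-symmetric and cycle-free, the remark following Theorem~\ref{hearonth3} guarantees a positive vector $(a_p)_{p\in\intintonen}$ satisfying the detailed balance relation \eref{detbal}; setting $D=\mathrm{diag}(\sqrt{a_1},\dots,\sqrt{a_n})$, the matrix $S:=D^{-1}MD$ is symmetric with the same tree zero-pattern as $M$. Similarity preserves the spectrum and the multiplicity of each eigenvalue, and deleting the row and column of a compartment $p$ commutes with the diagonal scaling, so the branch subsystems are symmetrized by the corresponding sub-blocks of $D$ and their spectra are unchanged; we may therefore assume $M$ is symmetric. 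For a fixed compartment $p$ whose neighbours determine the branches $B_1,\dots,B_d$ (the connected components of the underlying tree $T$ after deleting $p$), the principal submatrix $M(p)$ obtained by removing the $p$-th row and column is block diagonal with blocks $N_1,\dots,N_d$, so $m_{M(p)}(\lambda)=\sum_{i=1}^d m_{N_i}(\lambda)$, where $m_A(\lambda)$ denotes the multiplicity of $\lambda$ in $\spec A$. As $M(p)$ is a one-step principal submatrix of the symmetric $M$, Cauchy interlacing gives $|m_M(\lambda)-m_{M(p)}(\lambda)|\le 1$.

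The converse implication is then immediate. If some compartment $p$ is directly connected to three others lying in distinct branches $B_1,B_2,B_3$ with $\lambda\in\spec N_1\cap\spec N_2\cap\spec N_3$, then $m_{M(p)}(\lambda)=\sum_i m_{N_i}(\lambda)\ge 3$, and interlacing forces $m_M(\lambda)\ge m_{M(p)}(\lambda)-1\ge 2$, so $\lambda$ is a multiple eigenvalue.

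For the direct implication I would argue by strong induction on $n$, the cases $n\le 3$ being vacuous: a connected cycle-free system on at most three compartments is a path, and an irreducible symmetric tridiagonal matrix has only simple eigenvalues, so neither side can hold. Now suppose $m_M(\lambda)=k\ge 2$. The crucial step is to produce a \emph{Parter compartment} for $\lambda$, namely a vertex $p$ with $m_{M(p)}(\lambda)=k+1$, equivalently $\sum_i m_{N_i}(\lambda)=k+1\ge 3$. Granting this, two cases arise. If $\lambda$ is an eigenvalue of at least three of the blocks $N_i$, then $p$ is joined to three compartments $p_1,p_2,p_3$ in the corresponding branches and we are done. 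Otherwise $\lambda$ lies in at most two blocks, so one of them, say $N_j$, satisfies $m_{N_j}(\lambda)\ge 2$; since $N_j$ is the matrix of a strictly smaller sign-symmetric cycle-free connected subsystem, the induction hypothesis yields, inside $B_j$, a vertex $p'$ joined to three compartments whose three branches carry $\lambda$. Those three branches are components of $B_j\setminus\{p'\}$, hence also components of $T\setminus\{p'\}$, and their subsystem matrices are the corresponding principal submatrices of $M$, so $p'$ is the required compartment.

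The main obstacle is the existence of a Parter compartment, and this is exactly where cycle-freeness is indispensable. The mechanism is the elementary observation that if every $\lambda$-eigenvector of $M$ vanishes at a vertex $v$, then $v$ is a Parter or neutral compartment: each such eigenvector restricts to a $\lambda$-eigenvector of $M(v)$, giving $m_{M(v)}(\lambda)\ge m_M(\lambda)$, whereas an eigenvector nonzero at $v$ would conversely force $m_{M(v)}(\lambda)=m_M(\lambda)-1$. It thus suffices, for $k\ge 2$, to locate a vertex annihilated by the whole eigenspace and to verify that the multiplicity strictly increases there. The subspace of the eigenspace vanishing at any prescribed vertex already has dimension $\ge k-1\ge 1$, and one propagates these vanishing relations through the eigenvalue equations along $T$, the absence of cycles ruling out any consistency obstruction, until a vertex killed by the entire eigenspace is reached. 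Converting this propagation into a rigorous selection of a genuine Parter vertex of degree at least three is the delicate point I expect to demand the most care, and it is the heart of Parter's original argument \cite{parter60}.
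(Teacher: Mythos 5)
First, a point of comparison: the paper does not prove Theorem \ref{parterth} at all — it quotes it from Parter \cite{parter60} — so your attempt can only be measured against that cited literature, not against an in-paper argument. Your reduction to the symmetric case is correct (sign-symmetry plus cycle-freeness gives a positive detailed-balance vector as in \eref{detbal} and the remark after Theorem \ref{hearonth3}, and conjugation by $D=\mathrm{diag}(\sqrt{a_1},\dots,\sqrt{a_n})$ preserves spectra of the matrix and of all its branch submatrices), and your proof of the ``if'' direction is complete: $M(p)$ is block diagonal with the branch matrices as blocks, so $m_{M(p)}(\lambda)\geq 3$, and Cauchy interlacing gives $m_M(\lambda)\geq m_{M(p)}(\lambda)-1\geq 2$. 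The genuine gap is in the ``only if'' direction, at exactly the point you flag yourself: the existence of a Parter vertex, i.e.\ a vertex $p$ with $m_{M(p)}(\lambda)=m_M(\lambda)+1$ whenever $m_M(\lambda)\geq 2$. This is not a technical lemma that can be deferred; it \emph{is} the substance of Parter's theorem. Your proposed mechanism (every eigenvector vanishes at some vertex, propagate the vanishing relations along the tree) is the right intuition, but as written the argument is circular: you reduce the statement to ``the heart of Parter's original argument'' without supplying it, so the proof does not stand on its own.

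Second, even granting the Parter vertex, your inductive step contains an unjustified — in general false — claim. When $\lambda$ lies in at most two blocks and you invoke the induction hypothesis inside the branch $B_j$, you obtain $p'\in B_j$ and three components of $B_j\setminus\{p'\}$ carrying $\lambda$, and you assert these are ``also components of $T\setminus\{p'\}$.'' But unless $p'$ is itself the vertex of $B_j$ adjacent to $p$, exactly one component of $B_j\setminus\{p'\}$ contains that adjacent vertex, and in $T\setminus\{p'\}$ this component is enlarged by $p$ and all the remaining branches at $p$; $\lambda$ need not be an eigenvalue of the enlarged block, so if it is one of your three chosen components the conclusion fails. The classical treatments (Parter \cite{parter60} and the later Parter--Wiener refinements) circumvent precisely this by carrying a stronger statement through the induction — roughly, that under the weaker hypothesis $\lambda\in\spec(M)\cap\spec(M(v))$ for some vertex $v$ there is a Parter vertex with $\lambda$ in at least two of its branches, which is then upgraded to a degree-three vertex with three such branches when $m_M(\lambda)\geq 2$. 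Without such a strengthening, your induction does not close.
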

\subsection{The $n$-compartment catenary case}
\begin{figure}[H]
\begin{center}
\includegraphics[scale=0.625]{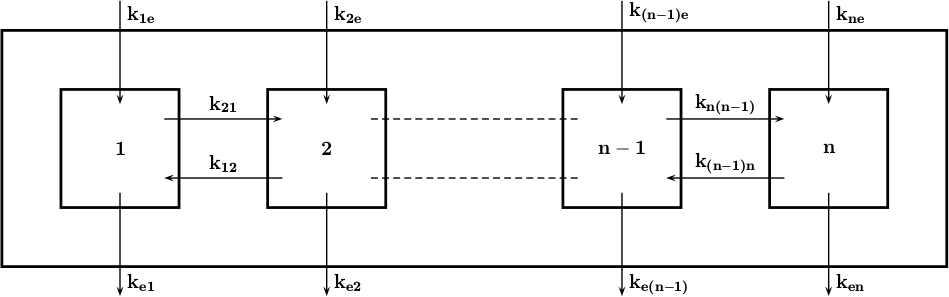}
\caption{Generic $n$-compartment catenary system; compartments are labelled $p\in\intintonen$, and the exchange rates accordingly, following the usual terminology.}
\end{center}
\end{figure}
A $n$-compartment catenary system is a $n$-compartment system such that
\beas
k_{\mathrm{e}p}\geq0,\qquad&p\in\intintonen,\\
k_{pq}>0,&p,q\in\intintonen,|p-q|=1,\\
k_{pq}=0,&p,q\in\intintonen,|p-q|>1,\\
k_{p\mathrm{e}}\geq0,&p\in\intintonen.
\eeas
According to the theorems \ref{hearonth1}, \ref{hearonth2}, \ref{hearonth3}, 
\ref{parterth} of the previous subsection, we have the following 
theorem
\begin{theorem}\label{simpnegeig}
The matrix $M$ of a $n$-compartment catenary system is diagonalizable and its 
eigenvalues are real, non-positive and simple. Moreover, $0$ is an eigenvalue 
of $M$ if and only if the system with no input is closed, that is 
$k_{\mathrm{e}p}=0$ for all $p\in\intintonen$.
\end{theorem}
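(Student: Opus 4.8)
The plan is to assemble the four theorems of Subsection \ref{genprop}, each of which contributes one part of the conclusion, after checking that a catenary system satisfies their respective hypotheses. The structural facts I would record first are that a catenary system is connected (so that $M=\Mfrak(H)$ is irreducible), cycle-free, and sign-symmetric: connectedness and the absence of cycles are immediate since the underlying graph is a simple chain, and sign-symmetry holds because $k_{pq}k_{qp}>0$ for $|p-q|=1$ while $k_{pq}=k_{qp}=0$ for $|p-q|>1$, so $k_{pq}=0$ exactly when $k_{qp}=0$.

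For diagonalizability together with the reality and non-positivity of the spectrum, I would invoke Theorem \ref{hearonth3}, whose only non-trivial hypothesis is the detailed balance condition \eref{detbal}. I claim a catenary system always satisfies it. For $|p-q|>1$ both $k_{pq}$ and $k_{qp}$ vanish, so \eref{detbal} holds trivially; for adjacent pairs it suffices to construct strictly positive weights $(a_p)_{p\in\intintonen}$ recursively by setting $a_1=1$ and
\bes
a_{p+1}=a_p\,\frac{k_{p+1,p}}{k_{p,p+1}},\qquad p\in\{1,\dots,n-1\},
\ees
which is well defined and positive because every adjacent rate $k_{p,p+1},k_{p+1,p}$ is strictly positive. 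By construction $k_{p,p+1}a_{p+1}=k_{p+1,p}a_p$, i.e. \eref{detbal} holds for each adjacent pair, and the chain has no cycle that could obstruct the recursion. Theorem \ref{hearonth3} then gives that $M$ is diagonalizable with real, non-positive eigenvalues.

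To promote this to simplicity of every eigenvalue, I would apply Parter's Theorem \ref{parterth}, which is valid precisely for sign-symmetric, cycle-free, connected systems. By that theorem a multiple eigenvalue can arise only if some compartment $p$ is directly connected to at least three distinct compartments. In a catenary system, however, compartment $p$ is connected only to $p-1$ and $p+1$ (whenever these indices lie in $\intintonen$), so no compartment has three neighbours; the criterion for a multiple eigenvalue can therefore never be met, and all eigenvalues are simple.

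Finally, for the characterization of the zero eigenvalue I would use Theorem \ref{hearonth2}, applicable because $M$ is irreducible: it yields that $0$ is an eigenvalue of $M$ if and only if $k_{\mathrm{e}p}=0$ for all $p\in\intintonen$, that is, if and only if the system with no input is closed, and its assertion that $0$ is then simple is already subsumed by the simplicity proved above. The work here is less a single hard step than the careful verification that the catenary class truly meets every hypothesis of the borrowed theorems — especially the detailed-balance construction feeding Theorem \ref{hearonth3} and the degree-two bound feeding Parter's criterion — after which the four results combine to give the full statement.
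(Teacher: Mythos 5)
Your proposal is correct and follows exactly the paper's route: the paper derives Theorem \ref{simpnegeig} by citing Theorems \ref{hearonth1}, \ref{hearonth2}, \ref{hearonth3} and \ref{parterth}, which is precisely the assembly you carry out. Your only addition is to make explicit what the paper leaves implicit — the recursive construction of the detailed-balance weights $a_p$, the degree-two bound ruling out Parter's multiplicity criterion, and the irreducibility of $M$ needed for Theorem \ref{hearonth2} — so the argument is the same, just spelled out in more detail.
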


\subsection{Identifiability definition}

We now recall the definition of identifiability \cite{evetal05,mietal11}. In order to 
introduce it, we need to denote by $\Wfrak$ the following linear operator
\bes
\begin{array}{rcl}
\Wfrak:C^0(\Rbb_+,\Rbb)^n&\to&L(\Rbb^n,C^0(\Rbb_+,\Rbb)^n)\\
\check{C}&\mapsto&\Wfrak(\check{C}),
\end{array}
\ees
where for all $\check{C}\in C^0(\Rbb_+,\Rbb)^n$, $H\in\Rbb^n$ and 
$p\in\intintonen$, we have $[\Wfrak(\check{C})(H)]_p=H_p\check{C}_p$, so that for all 
vector of input concentrations functions 
$\check{C}=(C_{p\mathrm{e}})_{p\in\intintonen}\in C^0(\Rbb_+,\Rbb_+)^n$ and 
$H\in\Rbb_+^n$, $\Wfrak(\check{C})(H)$ is the vector defined in 
\eref{vecCvecrhs} for the input concentrations functions $\check{C}$ and the 
parameters $H$. We denote by $\Cfrak$ the following function
\bes
\begin{array}{rcl}
\Cfrak:C^0(\Rbb_+,\Rbb)^n&\to&L(\Rbb^{n^2+n},C^1(\Rbb_+,\Rbb)^n)\\
\check{C}&\mapsto&\Cfrak(\check{C}),
\end{array}
\ees
where for all $\check{C}\in C^0(\Rbb_+,\Rbb)^n$ and $K\in\Rbb^{n^2+n}$, 
$C=\Cfrak(\check{C})(K)\in C^1(\Rbb_+,\Rbb)^n$ is the unique solution to
\bes
\diffC=MC+W,\qquad C(0)=0,
\ees
where $M=\Mfrak(\hat{K})$ and $W=\Wfrak(\check{C})(\check{K})$. For 
$\alpha\in{\Rbb^*_+}^n$, $\Cfrak^\alpha$ is the function defined by
\bes
\begin{array}{rcl}
\Cfrak^\alpha:C^0(\Rbb_+,\Rbb)^n&\to&L(\Rbb^{n^2+n},C^1(\Rbb_+,\Rbb))\\
\check{C}&\mapsto&\left[K\mapsto\alpha^T(\Cfrak(\check{C})(K))\right].
\end{array}
\ees
Consider a positive integer $n$, input concentration functions $\check{C}\in 
C^0(\Rbb_+,\Rbb_+)^n$, $\alpha\in{\Rbb^*_+}^n$ and a subset $\Omega$ of 
$\Rbb_+^{n^2+n}$ of admissible exchange rates $K$. Define 
$\tC=\Cfrak^\alpha(\check{C})$ and denote by $\tC_\Omega$ the restriction of $\tC$ to 
$\Omega$.
\begin{definition}\hspace{\linewidth}
\begin{enumerate}
\item The model of equations \eref{veceqC}-\eref{eqCtilde} is said globally 
identifiable at $K\in\Omega$, if $\tC_\Omega^{-1}(\{\tC(K)\})=\{K\}$.
\item The model of equation \eref{veceqC}-\eref{eqCtilde} is said locally 
identifiable at $K\in\Omega$ if there exists $\epsilon>0$ such that 
$\tC_{\Omega,K,\epsilon}^{-1}(\{\tC(K)\})=\{K\}$ where 
$\tC_{\Omega,K,\epsilon}$ 
denotes the restriction of $\tC_\Omega$ to the open ball $B_{K,\epsilon}$ of 
$\Omega$, with centre $K$ and radius $\epsilon$.
\item The model of equation \eref{veceqC}-\eref{eqCtilde} is said structurally 
globally identifiable if it is globally identifiable at all $K\in\Omega$.
\item The model of equation \eref{veceqC}-\eref{eqCtilde} is said structurally 
locally identifiable if it is locally identifiable at all $K\in\Omega$.
\end{enumerate}
\end{definition}

When we have a general $n$-compartment compartmental system, it is hard to find 
a precise characterization of identifiability; it is although possible to prove some weak results, as the following lemma,
leading, in particular cases, to more precise identifiability results.

We consider (from now on) $n$-compartment systems where for 
$p\in\intintonen$, the input concentration functions $C_{p\mathrm{e}}$ from 
$\Rbb_+$ to $\Rbb_+$ are Laplace-transformable. For $p\in\intintonen$, 
$r_{p\mathrm{e}}=\inf\left\{r\in\Rbb:\int_0^{+\infty}e^{-rt}C_{p\mathrm{e}}(t)\,
dt<+\infty\right\}$ will denote the abscissa of convergence of 
$C_{p\mathrm{e}}$. Note that in practical applications, the input concentration 
functions are bounded so that they are Laplace-transformable and for 
$p\in\intintonen$, $r_{p\mathrm{e}}\leq0$. The Laplace transform will be 
denoted 
by $\lap$ and for $r\in\Rbb$, we define $\Cbb_r=\{z\in\Cbb:\repa(z)>r\}$.
\begin{lemma}\label{fracdec}
Consider a positive integer $n$, Laplace-transformable input concentration 
functions $\check{C}=(C_{p\mathrm{e}})_{p\in\intintonen}\in 
C^0(\Rbb_+,\Rbb_+)^n$, $\alpha\in{\Rbb^*_+}^n$ and exchange rates 
$K\in\Rbb_+^{n^2+n}$. Then, for $\alpha\in{\Rbb_+^*}^n$, 
$\tC=\Cfrak^\alpha(\check{C})(K)$ is Laplace-transformable and its abscissa of 
convergence $r$ verifies
\be\label{absco}
r\leq 
r_{\mathrm{m}}=\max\left(\{r^{\mathrm{e}}_{\mathrm{m}}\}\cup\spec(M)\right),
\ee
where
\be\label{abscoe}
r^{\mathrm{e}}_{\mathrm{m}}=\max\left(\{r_{p\mathrm{e}}:p\in\intintonen\}\right)
,
\ee
and $\spec(M)$ denotes the spectrum of the matrix $M=\Mfrak(\hat{K})$. Moreover 
for all $z\in\Cbb_{r_{\mathrm{m}}}$, we have
\bes
\lap\tC(z)=\sum_{p=1}^n \alpha_pk_{p\mathrm{e}}\frac{Q_p(z)}{P(z)}\lap 
C_{p\mathrm{e}}(z),
\ees
where the unitary polynomial $P$ of degree $n$ is the characteristic polynomial 
of $M$ and for $p\in\intintonen$, $Q_p$ is a unitary polynomial of degree $n-1$ 
given by
\bes
Q_p(z)=\frac{\alpha^T\adj(zI_n-M)\ep}{\alpha_p},
\ees
where $\adj(zI_n-M)$ denotes the adjugate matrix of $zI_n-M$.
\end{lemma}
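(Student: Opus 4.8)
The plan is to compute the Laplace transform of the linear ODE system directly, treating the Laplace transform as a linear operator that turns differentiation into multiplication by $z$. First I would justify the existence of the Laplace transform by using Theorem \ref{hearonth1}: since the eigenvalues of $M=\Mfrak(\hat{K})$ have non-positive real part, the matrix exponential $e^{tM}$ grows no faster than a polynomial times $e^{\max(\spec(M))t}$, and convolving with the Laplace-transformable forcing term $W$ controls the growth of $C$. The solution to \eref{veceqC} is given by Duhamel's formula $C(t)=\int_0^t e^{(t-s)M}W(s)\,ds$, so $C$ is a convolution of $e^{tM}$ with $W$; the abscissa of convergence of a convolution is bounded by the maximum of the two individual abscissae, which gives the bound \eref{absco} with $r_{\mathrm{m}}=\max(\{r^{\mathrm{e}}_{\mathrm{m}}\}\cup\spec(M))$. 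Since $\tC=\alpha^T C$ is just a fixed linear combination of the components of $C$, it inherits Laplace-transformability with the same bound on its abscissa of convergence.

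Next I would derive the explicit formula on the half-plane $\Cbb_{r_{\mathrm{m}}}$. Applying $\lap$ to $\diffC=MC+W$ and using $\lap\diffC(z)=z\lap C(z)-C(0)=z\lap C(z)$ (by the initial condition $C(0)=0$), I obtain $z\lap C(z)=M\lap C(z)+\lap W(z)$, hence $(zI_n-M)\lap C(z)=\lap W(z)$. For $z\in\Cbb_{r_{\mathrm{m}}}$ we have $z\notin\spec(M)$, so $zI_n-M$ is invertible and $\lap C(z)=(zI_n-M)^{-1}\lap W(z)$. Using the cofactor formula $(zI_n-M)^{-1}=\frac{1}{\det(zI_n-M)}\adj(zI_n-M)=\frac{1}{P(z)}\adj(zI_n-M)$, and recalling from \eref{vecCvecrhs} that the $p$-th component of $W$ is $k_{p\mathrm{e}}C_{p\mathrm{e}}$, i.e.\ $\lap W(z)=\sum_{p=1}^n k_{p\mathrm{e}}\lap C_{p\mathrm{e}}(z)\,\ep$, I would write
\bes
\lap\tC(z)=\alpha^T\lap C(z)=\frac{1}{P(z)}\sum_{p=1}^n k_{p\mathrm{e}}\,\alpha^T\adj(zI_n-M)\ep\,\lap C_{p\mathrm{e}}(z).
\ees
Substituting the definition $Q_p(z)=\alpha^T\adj(zI_n-M)\ep/\alpha_p$ and pulling out $\alpha_p$ recovers the claimed formula.

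Finally I would verify the degree and leading-coefficient claims for the polynomials. That $P$ is unitary of degree $n$ is the standard fact that the characteristic polynomial of an $n\times n$ matrix is monic of degree $n$. For $Q_p$, the key observation is that $\adj(zI_n-M)$ has entries that are polynomials in $z$ of degree at most $n-1$, and its diagonal cofactors are characteristic polynomials of the $(n-1)\times(n-1)$ principal submatrices of $M$, hence monic of degree exactly $n-1$, while the off-diagonal entries have degree strictly less than $n-1$. Thus $\alpha^T\adj(zI_n-M)\ep=\sum_q \alpha_q [\adj(zI_n-M)]_{qp}$ is dominated by the diagonal term $\alpha_p[\adj(zI_n-M)]_{pp}$, whose leading coefficient is $\alpha_p$; dividing by $\alpha_p$ gives a unitary polynomial of degree $n-1$. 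The main obstacle I anticipate is the careful bookkeeping of the abscissa of convergence in the first step: one must argue rigorously that the convolution does not increase the abscissa beyond $r_{\mathrm{m}}$ and that the poles of the rational factor $Q_p/P$, which sit at the eigenvalues in $\spec(M)$, are exactly accounted for by the $\spec(M)$ term in the definition of $r_{\mathrm{m}}$, so that the formula is valid precisely on $\Cbb_{r_{\mathrm{m}}}$ where both sides converge.
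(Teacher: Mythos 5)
Your proposal is correct and follows essentially the same route as the paper's proof: Laplace-transform the ODE system using $C(0)=0$, invert $(zI_n-M)$ via the adjugate formula, and check that only the diagonal cofactors of $zI_n-M$ have degree $n-1$ with leading coefficient $1$. The only difference is that you flesh out, via Duhamel's formula and the convolution bound on the abscissa of convergence, the transformability claim that the paper dismisses as ``obvious'' — a welcome addition, not a change of method.
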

\begin{proof}
The first statement of the lemma is obvious. In addition, since 
$C=\Cfrak(\check{C})(K)$ verifies \eref{veceqC}, for all 
$z\in\Cbb_{r_{\mathrm{m}}}$, we have
\bes
\lap\tC(z)=\sum_{p=1}^nk_{p\mathrm{e}}\alpha^T(zI_n-M)^{-1}\ep\,\lap 
C_{p\mathrm{e}}(z).
\ees
Since
\bes
(zI_n-M)^{-1}=\frac{\adj(zI_n-M)}{\det(zI_n-M)}=\frac{\adj(zI_n-M)}{P(z)},
\ees
where $P$ is the characteristic polynomial of $M$, we have
\bes
\lap\tC(z)=\sum_{p=1}^nk_{p\mathrm{e}}\frac{\alpha^T\adj(zI_n-M)\ep}{P(z)}\lap 
C_{p\mathrm{e}}(z).
\ees
Moreover, it can be easily remarked that the only cofactors of $zI_n-M$ of 
degree 
$n-1$ in $z$ are the diagonal cofactors and that the coefficient of the 
monomial 
$z^{n-1}$ in these cofactors is $1$, so that for all $p\in\intintonen$, the 
polynomial $\alpha^T\adj(zI_n-M)\ep$ in $z$ is of degree $n-1$ and its leading 
coefficient is $\alpha_p$.
\end{proof}
For a positive integer $n$, we denote by $\Pfrak$ the function
\bes
\begin{array}{rcl}
\Pfrak:\Rbb^{n^2}&\to&\Rbb[X]\\
H&\mapsto&\det(XI_n-\Mfrak(H)),
\end{array}
\ees
and for $\alpha\in{\Rbb_+^*}^n$, we denote by $\Qfrak^\alpha$ the function
\bes
\begin{array}{rcl}
\Qfrak^\alpha:\Rbb^{n^2}&\to&\Rbb[X]^n\\
H&\mapsto&\dsp\left(\frac{\alpha^T\adj(XI_n-\Mfrak(H))\ep}{\alpha_p}\right)_{
p\in\intintonen}.
\end{array}
\ees
For $r\in\Rbb$, we denote by $\mero(\Cbb_r)$ the field of meromorphic functions 
on $\Cbb_r$ and by $\Rbb(X)$ the field of rational fractions in the 
indeterminate $X$ with coefficients in $\Rbb$. In the following, for 
$r\in\Rbb$, 
$\mero(\Cbb_r)$ will be regarded as a vector space over $\Rbb(X)$. As an 
immediate consequence of lemma \ref{fracdec}, we have
\begin{corollary}\label{coreqratfrac}
Consider a positive integer $n$ and Laplace-transformable input concentration 
functions $\check{C}=(C_{p\mathrm{e}})_{p\in\intintonen}\in 
C^0(\Rbb_+,\Rbb_+)^n$ such that $C_{p\mathrm{e}}=0$ for 
$p\in\intintonen\setminus\{p_\ell:\ell\in\{1,\dots,k\}\}$, where $k\geq1$ and 
$(p_\ell)_{\ell\in\{1,\dots,k\}}$ are $k$ distinct elements of $\intintonen$. 
We recall that for exchange rates $K\in\Rbb_+^{n2+n}$, $\check{K}$ denotes the 
last $n$ components of $K$, that is, for $p\in\intintonen$, 
$\check{K}_p=k_{p\mathrm{e}}$. Consider $\alpha\in{\Rbb_+^*}^n$ and define 
$\tC=\Cfrak^\alpha(\check{C})$. If the functions $(\lap 
C_{p_\ell\mathrm{e}})_{\ell\in\{1,\dots,k\}}$ of 
$\mero(\Cbb_{r^{\mathrm{e}}_{\mathrm{m}}})$ are linearly independent over 
$\Rbb(X)$, where $r^{\mathrm{e}}_{\mathrm{m}}$ is defined by \eref{abscoe}, 
then 
for $K\in\Rbb_+^{n^2+n}$, the exchange rates 
$(\check{K}_{p_\ell\mathrm{e}})_{\ell\in\{1,\dots,k\}}$ are uniquely determined 
by the function $\tC(K)$, that is, for all $K'\in\Rbb_+^{n^2+n}$, we have 
$\tC(K')=\tC(K)$ only if 
$\check{K}_{p_\ell\mathrm{e}}=\check{K}'_{p_\ell\mathrm{e}}$ for all 
$\ell\in\{1,\dots,k\}$. More precisely, we have $\tC(K')=\tC(K)$ if and only if 
the previous condition holds and for all $\ell\in\{1,\dots,k\}$
\bes
\frac{\Qfrak^\alpha(K')_{p_\ell}}{\Pfrak(K')}=\frac{\Qfrak^\alpha(K)_{p_\ell}}{
\Pfrak(K)}.
\ees
\end{corollary}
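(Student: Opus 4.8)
The plan is to apply Lemma \ref{fracdec} to both $K$ and $K'$ and to exploit the vanishing $C_{p\mathrm{e}}=0$ for $p\notin\{p_\ell:\ell\in\{1,\dots,k\}\}$ to collapse the defining sum. Observing that the polynomials $P$ and $Q_p$ produced by the lemma are precisely $\Pfrak(K)$ and $\Qfrak^\alpha(K)_p$ (these depend only on $\hat{K}$), the lemma gives, on a common right half--plane $\Cbb_r$,
\bes
\lap\tC(K)(z)=\sum_{\ell=1}^k\alpha_{p_\ell}\,\check{K}_{p_\ell}\,\frac{\Qfrak^\alpha(K)_{p_\ell}(z)}{\Pfrak(K)(z)}\,\lap C_{p_\ell\mathrm{e}}(z),
\ees
together with the analogous identity for $K'$. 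The crucial structural point, also from the lemma, is that $\Pfrak(K)$ is monic of degree $n$ while each $\Qfrak^\alpha(K)_{p_\ell}$ is monic of degree $n-1$; in particular every coefficient multiplying some $\lap C_{p_\ell\mathrm{e}}$ is a nonzero element of $\Rbb(X)$.

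First I would convert the functional equality $\tC(K')=\tC(K)$ on $\Rbb_+$ into an identity in $\mero(\Cbb_r)$. Both sides are Laplace--transformable by Lemma \ref{fracdec}, and by injectivity of the Laplace transform the equality on $\Rbb_+$ is equivalent to equality of $\lap\tC(K')$ and $\lap\tC(K)$ on any half--plane contained in both domains of convergence. Subtracting the two expansions and regrouping by input yields
\bes
\sum_{\ell=1}^k\alpha_{p_\ell}\left(\check{K}_{p_\ell}\frac{\Qfrak^\alpha(K)_{p_\ell}}{\Pfrak(K)}-\check{K}'_{p_\ell}\frac{\Qfrak^\alpha(K')_{p_\ell}}{\Pfrak(K')}\right)\lap C_{p_\ell\mathrm{e}}=0,
\ees
an equation all of whose coefficients lie in $\Rbb(X)$.

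Next I would invoke the hypothesis. Since $(\lap C_{p_\ell\mathrm{e}})_{\ell}$ is linearly independent over $\Rbb(X)$ and each bracketed coefficient belongs to $\Rbb(X)$, every coefficient must vanish; because $\alpha_{p_\ell}>0$ this gives, for each $\ell$, the scalar/rational identity
\bes
\check{K}_{p_\ell}\,\frac{\Qfrak^\alpha(K)_{p_\ell}}{\Pfrak(K)}=\check{K}'_{p_\ell}\,\frac{\Qfrak^\alpha(K')_{p_\ell}}{\Pfrak(K')}.
\ees
To separate the two asserted conclusions I would use the degree normalisations: both fractions behave like $1/X$ as $X\to\infty$, so multiplying through by $X$ and letting $X\to\infty$ forces $\check{K}_{p_\ell}=\check{K}'_{p_\ell}$ whenever either side is nonzero (and if $\check{K}_{p_\ell}=0$ then $\check{K}'_{p_\ell}=0$ as well, the right-hand fraction being a nonzero element of $\Rbb(X)$). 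This is exactly the claimed uniqueness of the input rates. Cancelling the now-common nonzero scalar then returns the equality of fractions $\Qfrak^\alpha(K)_{p_\ell}/\Pfrak(K)=\Qfrak^\alpha(K')_{p_\ell}/\Pfrak(K')$, and the converse implication follows by reading the same steps backwards.

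I expect the main difficulty to be this separation step rather than any deep idea: one has to lean on the monic, degree-$n$ and degree-$(n-1)$ normalisations furnished by Lemma \ref{fracdec} to decouple the scalar $\check{K}_{p_\ell}$ from its rational factor, and one must isolate the degenerate branch $\check{K}_{p_\ell}=\check{K}'_{p_\ell}=0$, where the corresponding input term vanishes identically and the equality of the two fractions is vacuous. Consequently the clean ``if and only if'' is to be read under the physically natural nondegeneracy $\check{K}_{p_\ell}>0$, i.e. that each retained input is genuinely injected; under that proviso the two listed conditions are precisely equivalent to the termwise identity above, hence to $\tC(K')=\tC(K)$.
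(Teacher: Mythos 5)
Your proposal is correct and takes essentially the same approach as the paper: there the corollary is stated as an immediate consequence of Lemma \ref{fracdec}, and your argument---collapsing the lemma's sum over the nonzero inputs, converting $\tC(K')=\tC(K)$ into an identity of Laplace transforms, equating coefficients termwise via linear independence over $\Rbb(X)$, and decoupling $\check{K}_{p_\ell}$ from its rational factor through the monic degree-$(n-1)$/degree-$n$ normalisation---is precisely that immediacy written out. Your caveat about the degenerate branch $\check{K}_{p_\ell}=\check{K}'_{p_\ell}=0$ is a sharp and correct observation (the literal ``if and only if'' requires each retained input rate to be nonzero), and it is harmless in context since the paper only invokes the corollary on parameter sets such as $\Omega={\Rbb_+^*}^4$ or ${\Rbb_+^*}^6$ where all exchange rates are positive.
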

Otherwise, in a more general case, we have
\begin{corollary}
Consider a positive integer $n$ and Laplace-transformable input concentration 
functions $\check{C}=(C_{p\mathrm{e}})_{p\in\intintonen}\in 
C^0(\Rbb_+,\Rbb_+)^n$, then, there exist $k$ distinct elements 
$(p_\ell)_{\ell\in\{1,\dots,k\}}$ of $\intintonen$, where $k\in\{0,\dots,n\}$ 
($k=0$ if all input concentration functions are identically zero) such that the 
functions $(\lap C_{p_\ell\mathrm{e}})_{\ell\in\{1,\dots,k\}}$ of 
$\mero(\Cbb_{r^{\mathrm{e}}_{\mathrm{m}}})$ are linearly independent over 
$\Rbb(X)$, where $r^{\mathrm{e}}_{\mathrm{m}}$ is defined by \eref{abscoe}, and 
such that for all $p\in\intintonen\setminus\{p_\ell:\ell\in\{1,\dots,k\}\}$, 
$\lap C_{p\mathrm{e}}\in\spanvec(\{\lap 
C_{p_\ell\mathrm{e}}:\ell\in\{1,\dots,k\}\})$. Consider $\alpha\in{\Rbb_+^*}^n$ 
and denote by $\tC=\Cfrak^\alpha(\check{C})$. For all $K\in\Rbb^{n^2+n}$ and 
$z\in\Cbb_{r_\mathrm{m}}$, where $r_{\mathrm{m}}$ is defined by \eref{absco}, 
we 
then have
\bes
\lap\tC(z)=\sum_{\ell=1}^kF^{\check{C},K,\alpha}_\ell(z)\lap 
C_{p_\ell\mathrm{e}}(z),
\ees
where for all $\ell\in\{1,\dots,k\}$, $F^{\check{C},K,\alpha}_\ell\in\Rbb(X)$. 
Moreover, for $K,K'\in\Rbb^{n^2+n}$, we have $\tC(K)=\tC(K')$ if and only if 
for 
all $\ell\in\{1,\dots,k\}$
\bes
F^{\check{C},K',\alpha}_\ell=F^{\check{C},K,\alpha}_\ell.
\ees
\end{corollary}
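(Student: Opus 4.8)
The plan is to reduce the general statement to the special, linearly-independent case already handled by Corollary~\ref{coreqratfrac}, by extracting a maximal linearly-independent subfamily of the Laplace transforms $(\lap C_{p\mathrm{e}})_{p\in\intintonen}$ over the field $\Rbb(X)$. Concretely, I would first invoke the fact that $\mero(\Cbb_{r^{\mathrm{e}}_{\mathrm{m}}})$, regarded as a vector space over $\Rbb(X)$, contains the finite family $(\lap C_{p\mathrm{e}})_{p\in\intintonen}$; since any finite set in a vector space admits a maximal linearly-independent subset spanning the same subspace, I obtain indices $(p_\ell)_{\ell\in\{1,\dots,k\}}$ with $k\in\{0,\dots,n\}$ such that $(\lap C_{p_\ell\mathrm{e}})_{\ell}$ is $\Rbb(X)$-linearly independent and every remaining $\lap C_{p\mathrm{e}}$ lies in their $\Rbb(X)$-span. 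The degenerate case $k=0$ (all inputs zero) is immediate, since then $W=0$, $C\equiv 0$, and $\tC\equiv 0$ for every $K$. This establishes the existence claim at the head of the corollary.

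\emph{Next} I would derive the promised expression for $\lap\tC(z)$. Starting from Lemma~\ref{fracdec}, we have for $z\in\Cbb_{r_{\mathrm{m}}}$
\bes
\lap\tC(z)=\sum_{p=1}^n \alpha_pk_{p\mathrm{e}}\frac{Q_p(z)}{P(z)}\lap C_{p\mathrm{e}}(z),
\ees
where each $Q_p/P$ is a rational fraction in $\Rbb(X)$ depending on $\hat K$ and $\alpha$. I would then substitute, for each index $p\notin\{p_\ell\}$, the expansion $\lap C_{p\mathrm{e}}=\sum_{\ell=1}^k\lambda_{p\ell}\,\lap C_{p_\ell\mathrm{e}}$ with coefficients $\lambda_{p\ell}\in\Rbb(X)$ furnished by the spanning property, and collect terms. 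Regrouping by the basis functions $\lap C_{p_\ell\mathrm{e}}$ yields
\bes
\lap\tC(z)=\sum_{\ell=1}^kF^{\check{C},K,\alpha}_\ell(z)\,\lap C_{p_\ell\mathrm{e}}(z),
\ees
where each coefficient $F^{\check{C},K,\alpha}_\ell\in\Rbb(X)$ is an $\Rbb(X)$-linear combination of the fractions $\alpha_pk_{p\mathrm{e}}Q_p/P$, hence again rational. This is valid on $\Cbb_{r_{\mathrm{m}}}$, which is where all the relevant Laplace transforms converge.

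\emph{Finally}, for the equivalence characterising $\tC(K)=\tC(K')$, the key tool is the $\Rbb(X)$-linear independence of $(\lap C_{p_\ell\mathrm{e}})_\ell$ in $\mero(\Cbb_{r^{\mathrm{e}}_{\mathrm{m}}})$. The \textbf{sufficiency} direction is trivial: if $F^{\check{C},K',\alpha}_\ell=F^{\check{C},K,\alpha}_\ell$ for all $\ell$, then the two expansions of $\lap\tC$ coincide on a common half-plane, so by injectivity of the Laplace transform the two functions $\tC(K)$ and $\tC(K')$ agree. For \textbf{necessity}, suppose $\tC(K)=\tC(K')$; subtracting their expansions gives $\sum_{\ell=1}^k\bigl(F^{\check{C},K,\alpha}_\ell-F^{\check{C},K',\alpha}_\ell\bigr)\lap C_{p_\ell\mathrm{e}}=0$ as an element of $\mero(\Cbb_{r^{\mathrm{e}}_{\mathrm{m}}})$, and here I must be careful to work on the common domain of convergence so that the identity genuinely holds as a meromorphic-function relation. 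Linear independence over $\Rbb(X)$ then forces every coefficient to vanish, i.e.\ $F^{\check{C},K,\alpha}_\ell=F^{\check{C},K',\alpha}_\ell$ for all $\ell$.

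\emph{The main obstacle} I anticipate is the bookkeeping of convergence half-planes: the individual fractions $Q_p/P$ may have poles at the eigenvalues of $M$ (the set $\spec(M)$), and these poles differ between $K$ and $K'$, so the equality of the two expansions must be asserted on the half-plane $\Cbb_{r_{\mathrm{m}}}$ determined by \eref{absco}, while the linear-independence hypothesis is stated on the possibly larger domain $\Cbb_{r^{\mathrm{e}}_{\mathrm{m}}}$. Since an identity of meromorphic functions valid on an open half-plane extends to the whole connected domain where both sides are meromorphic, one identifies the relation across $\Cbb_{r^{\mathrm{e}}_{\mathrm{m}}}$ and may then apply $\Rbb(X)$-independence there. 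Handling this analytic-continuation step cleanly — and confirming that the coefficients $F_\ell$, being rational fractions independent of the particular half-plane, are legitimately compared as elements of $\Rbb(X)$ rather than merely as functions on a restricted domain — is where the argument requires the most care.
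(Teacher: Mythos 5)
Your proposal is correct and follows essentially the route the paper intends: the paper states this corollary without any written proof, presenting it as the general-case companion of Corollary~\ref{coreqratfrac}, i.e.\ as a direct consequence of Lemma~\ref{fracdec}, and your argument (extract a maximal $\Rbb(X)$-linearly independent subfamily of the $\lap C_{p\mathrm{e}}$, substitute the span expansions into the partial-fraction formula of Lemma~\ref{fracdec}, then use injectivity of the Laplace transform for sufficiency and $\Rbb(X)$-linear independence for necessity) is exactly the filling-in of those omitted details. Your attention to the analytic-continuation step --- extending the identity from the common half-plane $\Cbb_{r_{\mathrm{m}}}$ to $\Cbb_{r^{\mathrm{e}}_{\mathrm{m}}}$ before invoking independence, and noting that the embedding of $\Rbb(X)$ into $\mero(\Cbb_{r^{\mathrm{e}}_{\mathrm{m}}})$ is injective --- addresses the only point of genuine care, which the paper glosses over entirely.
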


These are general, weak, results on identifiability of $n$-compartment systems; in the next sections, we will focus our study of identifiability on the $2$-compartments and the $3$-compartments catenary systems, and find a precise characterization of identifiability.

\section{Study of a $2$-compartment catenary system}

\begin{figure}[H]
\begin{center}
\includegraphics[scale=0.75]{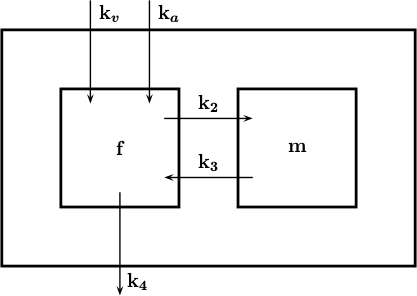}
\caption{Generic $2$-compartment catenary model.}\label{catenary-2-figure}
\end{center}
\end{figure}
We introduce now the case of a $2$-compartment catenary system. The low degree 
of complexity of this $2$-compartmental model, allows its daily clinical utilization, without high computational request. 
For this reason, this model gained high popularity over the last twenty years, and has been extensively used to analyze PET data. 
FDG-PET data are particularly favourable for the application of this model; it is indeed possible to simplify the metabolism of the tracer in 
two chemical phases: a free intracellular phase and a metabolized ones 
\cite{gaetal13, gaetal14, qietal07}, that well suit a $2$-compartment catenary model. We recall that the evolution equations in 
a $2$-compartment catenary system as 
in Figure \ref{catenary-2-figure} are given by
\beas
\diffC_1=-\left(k_{\mathrm{e}1}+k_{21}\right)C_1+k_\mathrm{12}C_2,\\
\diffC_2=k_{21}C_1-k_{12}C_2,
\eeas
with initial conditions
\bes
C_1(0)=0,\qquad C_2(0)=0.
\ees
That is
\be\label{vecorddiffeqC2}
\diffC=MC+W,\qquad C(0)=0,
\ee
where $M\in M_2(\Rbb),C\in C^1(\Rbb_+,\Rbb)^2,W\in C^0(\Rbb_+,\Rbb)^2$ are 
given by
\bes
\fl M=\pmatrix{-(k_{\mathrm{e}1}+k_{21})&k_{12}\cr
k_{21}&-k_{12}\cr},\qquad C=\pmatrix{C_1\cr
C_2\cr},\qquad W=k_{1\mathrm{e}}\pmatrix{C_{1\mathrm{e}}\cr
0\cr}=k_{1\mathrm{e}}C_{1\mathrm{e}}\eone.
\ees
PET-scan images allow to access $C^*=(1-V)(C_1+C_2)+VC_{1\mathrm{e}}$ where $V$ 
is a fraction in $(0,1)$, represent the blood fraction in the tissue 
under examination, and in this analysis, as already observed, it is assumed to be known. Since $C_{1\mathrm{e}}$ is the concentration of tracer in 
blood, it is directly measurable \cite{kuetal09, raetal13}. It is therefore possible 
to rewrite the previous equation as $\tC = (1-V)(C_1+C_2)$, where 
$\tC=C^*-VC_{1\mathrm{e}}$, which is measurable from PET-scans. The inverse 
problem consists in recovering $k_{1\mathrm{e}}$, $k_{\mathrm{e}1}$, $k_{21}$, 
$k_{12}$ from the knowledge of $V$, $C_{1\mathrm{e}}$ and
\be\label{eqCtilde2comp}
\tC=\alpha^T C,\qquad\mbox{where }\alpha=\pmatrix{1-V &
1-V\cr}.
\ee
More generally, we will study the identifiability of this $2$-compartment 
catenary system on $\Omega={\Rbb_+^*}^4$. For sake of simplicity, in the 
following of the document, the exchange rates of the system are denoted by 
$a,b,c,k$ where
\bes
a=k_{\mathrm{e}1},\qquad b=k_{21},\qquad c=k_{12},\qquad k=k_{1\mathrm{e}}.
\ees
$M$ and $W$ are then rewritten
\bes
M=\pmatrix{-(a+b)&c\cr
b&-c\cr},\qquad W=kC_{1\mathrm{e}}\eone.
\ees
We recall that according to theorem \ref{simpnegeig}, $M$ is diagonalizable and 
its eigenvalues are simple and negative. We suppose that $C_{1\mathrm{e}}$ is 
bounded and not identically zero. Hence, according to corollary 
\ref{coreqratfrac}, we first have the following lemma
\begin{lemma}
$k$ is uniquely determined by the knowledge of $\tC$.
\end{lemma}
Now, in order to know the solutions of the inverse problem, we have to search
for the matrices $M_x\in M_2(\Rbb)$
\bes
M_x=\pmatrix{-(\xa+\xb)&\xc\cr
\xb&-\xc\cr},
\ees
where $\xa,\xb,\xc\in\Rbb^*_+$, such that $F_x=F$, with $F=Q/P$ and 
$F_x=Q_x/P_x$, where $P$ is the characteristic polynomial of $M$, $P_x$ is the 
characteristic polynomial of $M_x$ and $Q,Q_x$ are defined by
\bes
Q(X)=\alpha^T\adj(X-M)\eone,\qquad Q_x(X)=\alpha^T\adj(X-M_x)\eone.
\ees
That is
\beas
P(X)=X^2+(a+b+c)X+ac,\\
Q(X)=(1-V)(X+b+c),\\
P_x(X)=X^2+(\xa+\xb+\xc)X+\xa\xc,\\
Q_x(X)=(1-V)(X+\xb+\xc).
\eeas
First of all, we write the rational fraction by splitting the polynomial $P$
\bes
F(X)=\frac{(1-V)(X+b+c)}{(X-\lambda_1)(X-\lambda_2)},
\ees
where $\lambda_1=-\frac{a+b+c+\sqrt{(a+b+c)^2-4ac}}{2}$ and 
$\lambda_2=-\frac{a+b+c-\sqrt{(a+b+c)^2-4ac}}{2}$. Remark that in this simple 
example, we can easily verify the statement of theorem \ref{simpnegeig}. Indeed, 
we 
have $0<(a+b-c)^2+4bc=(a+b+c)^2-4ac<a+b+c$ so that $\lambda_1$ and $\lambda_2$ 
are real, negative and distinct. Moreover, we have 
$\lambda_1,\lambda_2\neq-(b+c)$, hence, the rational fraction $F$ is in its 
irreducible form. Since $\deg P_x = \deg P =3$ and $\deg Q_x = \deg Q =2$, we have
$F_x=F$ only if the rational fraction $F_x$ is irreducible too (but as we have 
just seen, since $\xa,\xb,\xc>0$, it is always irreducible). In addition, since 
the leading coefficients of $P_x$ and $P$ are identical, as well as those of 
$Q_x$ and $Q$, we have $F_x=F$ if and only if $P_x=P$ and $Q_x=Q$, that is
\beas
\xa+\xb+\xc=a+b+c,\\
\xa\xc=ac,\\
\xb+\xc=b+c.
\eeas
Hence $F_x=F$ if and only if $\xa=a$, $\xb=b$, $\xc=c$, that is $M_x=M$. 
Consequently, we have the following theorem
\begin{theorem}
The model of equations \eref{vecorddiffeqC2}-\eref{eqCtilde2comp} is 
structurally globally identifiable.
\end{theorem}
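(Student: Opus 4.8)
The plan is to unfold the definition of structural global identifiability on $\Omega={\Rbb_+^*}^4$ and reduce everything to the transfer-function representation already established. So I fix an arbitrary admissible vector with exchange rates $a,b,c,k\in\Rbb_+^*$, take a second admissible vector $K'$ with rates $\xa,\xb,\xc\in\Rbb_+^*$ and input rate $k'$, assume the two produce the same data $\tC$, and aim to conclude they coincide. Since $C_{1\mathrm{e}}$ is bounded and not identically zero, $\lap C_{1\mathrm{e}}$ is a nonzero element of $\mero(\Cbb_{r^{\mathrm{e}}_{\mathrm{m}}})$ and hence linearly independent over $\Rbb(X)$ on its own, so Corollary \ref{coreqratfrac} applies with a single nonzero input indexed by $p_1=1$. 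Its conclusion is precisely that $\tC(K')=\tC(K)$ holds if and only if $k'=k$ and the scalar rational fractions coincide, $F_x=F$, where $F=Q/P$ and $F_x=Q_x/P_x$ in the notation fixed above.

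The next step is to pass from equality of rational fractions to equality of polynomials, and the crux is irreducibility. By Theorem \ref{simpnegeig} the denominators $P$ and $P_x$ each have two distinct real roots, so I only need to rule out that the numerator root $-(b+c)$ cancels one of them. The direct evaluation
\bes
P(-(b+c))=(b+c)^2-(a+b+c)(b+c)+ac=-ab<0
\ees
shows $-(b+c)$ is not a root of $P$, so $F$ is in lowest terms; the identical computation with $\xa,\xb,\xc$ shows the same for $F_x$. Two fractions in lowest terms with monic denominators of equal degree are equal only if the denominators coincide and then so do the numerators; hence $F_x=F$ forces $P_x=P$ and $Q_x=Q$.

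Comparing coefficients gives $\xa+\xb+\xc=a+b+c$ together with $\xa\xc=ac$ from $P_x=P$, and $\xb+\xc=b+c$ from $Q_x=Q$. Subtracting the last relation from the first yields $\xa=a$; since $a\neq0$, the relation $\xa\xc=ac$ then gives $\xc=c$, and finally $\xb=b$. Thus $M_x=M$ and, combined with $k'=k$, the two parameter vectors are identical. As the initial vector was an arbitrary point of $\Omega$, this is global identifiability at every point, i.e. structural global identifiability.

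I do not expect a genuine obstacle here: once Corollary \ref{coreqratfrac} has reduced the transcendental identity $\tC(K')=\tC(K)$ to the algebraic one $F_x=F$, the remaining computation is linear in the three symmetric combinations of the parameters. The single step that actually carries the argument is the irreducibility check. Were $-(b+c)$ an eigenvalue of $M$, the common factor would cancel, the transfer function would drop to numerator degree $0$, and the three coefficient relations would collapse to two, destroying injectivity of the parameter-to-coefficient map. It is exactly the strict positivity of $a$ and $b$, making $P(-(b+c))=-ab$ strictly negative, that prevents this degeneracy, so structural global identifiability rests on the hypothesis $\Omega={\Rbb_+^*}^4$.
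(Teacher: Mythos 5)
Your proof is correct and takes essentially the same route as the paper: both reduce the data equality to the rational-fraction identity $F_x=F$ via Corollary \ref{coreqratfrac} (which also fixes $k$), establish that $F$ and $F_x$ are irreducible, and then match the coefficients of $P_x=P$ and $Q_x=Q$ to force $\xa=a$, $\xb=b$, $\xc=c$. The only cosmetic difference is the irreducibility check, where the paper notes that the explicit eigenvalues $\lambda_1,\lambda_2$ differ from $-(b+c)$, while you verify the same fact more directly by computing $P(-(b+c))=-ab<0$.
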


\section{Study of a $3$-compartment catenary system}
\begin{figure}[H]
\begin{center}
\includegraphics[scale=0.75]{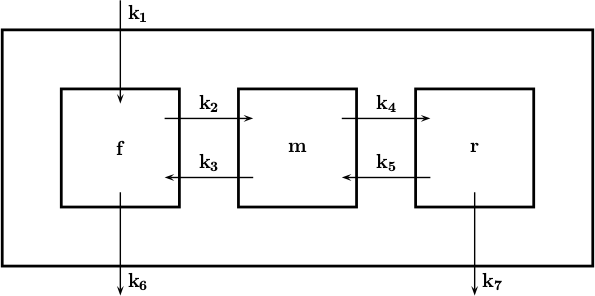}
\caption{Generic $3$-compartment catenary model.}\label{catenary-3-figure}
\end{center}
\end{figure}

We now introduce a $3$-compartment catenary system \cite{raetal13}. 
This model is commonly used either in FDG-PET studies (and the third compartment accounts for the presence of a blood stream that carries tracer inside the organ under examination, approximated by $C_{1e}$ in $2$-compartment models) or in $^{11}C$-PET studies, where the tracer has an intrinsic three-status metabolism \cite{koetal91}. We recall that the evolution equations in 
a $3$-compartment catenary system as in Figure \ref{catenary-3-figure} are 
given by
\beas
\diffC_1=-\left(k_{\mathrm{e}1}+k_{21}\right)C_1+k_\mathrm{12}C_2+k_{1\mathrm{e}
}C_{1\mathrm{e}},\\
\diffC_2=k_{21}C_1-\left(k_{12}+k_{32}\right)C_2+k_{23}C_3,\\
\diffC_3=k_{32}C_2-k_{23}C_3,
\eeas
with initial conditions
\bes
C_1(0)=0,\qquad C_2(0)=0,\qquad C_3(0)=0.
\ees
That is
\be\label{vecorddiffeqC3}
\diffC=MC+W,\qquad C(0)=0,
\ee
where $M\in M_3(\Rbb),C\in C^1(\Rbb_+,\Rbb)^3,W\in C^0(\Rbb_+,\Rbb)^3$ are given 
by
\beas
M=\pmatrix{-(k_{\mathrm{e}1}+k_{21})&k_{12}&0\cr
k_{21}&-(k_{12}+k_{32})&k_{23}\cr
0&k_{32}&-k_{23}\cr},\\
C=\pmatrix{C_1\cr
C_2\cr
C_3\cr},\qquad W=k_{1\mathrm{e}}\pmatrix{C_{1\mathrm{e}}\cr
0\cr
0\cr}=k_{1\mathrm{e}}C_{1\mathrm{e}}\eone.
\eeas
PET-scan images allow to access $\tC=VC_1+(1-V)\left(C_2+C_3\right)$ where $V$ 
is a fraction in $(0,1)$, representing the blood fraction in the tissue 
under examination, and in this analysis, as already observed, it is assumed to be known. The inverse problem consists in recovering $k_{1\mathrm{e}}$, 
$k_{\mathrm{e}1}$, $k_{21}$, $k_{12}$, $k_{32}$, $k_{23}$ from the knowledge of 
$V$, $C_{1\mathrm{e}}$ and
\be\label{eqCtilde3comp}
\tC=\alpha^T C,\qquad\mbox{where }\alpha=\pmatrix{V &
1-V &
1-V\cr}.
\ee
More generally, we will study the identifiability of this $3$-compartment 
catenary system on $\Omega={\Rbb_+^*}^6$. For sake of simplicity, in the 
following of the document, the exchange rates of the system are denoted by 
$a,b,c,d,e,k$ where
\bes
\fl a=k_{\mathrm{e}1},\qquad b=k_{21},\qquad c=k_{12},\qquad d=k_{32},\qquad 
e=k_{23},\qquad k=k_{1\mathrm{e}}.
\ees
$M$ and $W$ are then rewritten
\bes
M=\pmatrix{-(a+b)&c&0\cr
b&-(c+d)&e\cr
0&d&-e\cr},\qquad W=kC_{1\mathrm{e}}\eone.
\ees
We recall that according to theorem \ref{simpnegeig}, $M$ is diagonalizable and 
its eigenvalues are simple and negative. We suppose that $C_{1\mathrm{e}}$ is 
bounded and not identically zero. Hence, according to corollary 
\ref{coreqratfrac}, we first have the following lemma
\begin{lemma}
$k$ is uniquely determined by the knowledge of $\tC$.
\end{lemma}
Using the properties of the $n$-compartment catenary systems given in theorem \ref{simpnegeig} and the corollary \ref{coreqratfrac}, we can write the following theorem, whose proof can be found in the appendix \ref{appendix}.
\begin{theorem}\label{identifiability-3-theorem}
If $V\geq1/2$, the model of equations 
\eref{vecorddiffeqC3}-\eref{eqCtilde3comp} is structurally globally identifiable. Otherwise, the model is neither structurally globally nor structurally locally identifiable. However, the model is locally identifiable at points 
$(a,b,c,d,e,k)\in{\Rbb_+^*}^6$ such that
\bes
a\neq\frac{1-2V}{2V(1-V)}\left((c+d+e)V+(1-V)b\pm\sqrt{\Delta_Q}\right).
\ees
More precisely, in that case, the model is globally identifiable at points 
$(a,b,c,d,e,k)\in{\Rbb_+^*}^6$ such that one of the following exclusive 
conditions of lemma \ref{locunipqcoprime} holds: \ref{1}, \ref{2.1}, 
\ref{2.2.1}, \ref{2.2.2.1}, \ref{3.1.1}, \ref{3.1.2.1.1}, \ref{3.1.2.2}, 
\ref{3.2.1}, \ref{3.2.2.1}, \ref{3.2.2.2.1}, \ref{3.3.1.1}, \ref{3.3.2.1}. 
Otherwise, the inverse problem has two distinct solutions. In the case where
\bes
a=\frac{1-2V}{2V(1-V)}\left((c+d+e)V+(1-V)b\pm\sqrt{\Delta_Q}\right),
\ees
the model is neither globally nor locally identifiable. More precisely, the set 
of solutions of the inverse problem is a curve of $\Rbb^6$.
\end{theorem}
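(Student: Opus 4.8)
The plan is to reduce the identifiability question to a purely algebraic one via Corollary \ref{coreqratfrac}, and then to solve that algebraic problem by univariate elimination while tracking positivity. Since the only nonzero input is $C_{1\mathrm{e}}$ (assumed bounded and not identically zero, hence a single function that is trivially linearly independent over $\Rbb(X)$), Corollary \ref{coreqratfrac} applies: it already gives that $k=k_{1\mathrm{e}}$ is determined, and it reduces $\tC(K')=\tC(K)$ to the rational-function identity $Q_x/P_x=Q/P$, where $P(X)=\det(XI_3-M)$ and $Q(X)=\alpha^{T}\adj(XI_3-M)\eone$. First I would compute these explicitly, getting $P(X)=X^3+(a+b+c+d+e)X^2+(ac+ad+ae+bd+be+ce)X+ace$ and the quadratic $Q(X)=VX^2+[(c+d+e)V+(1-V)b]X+[Vce+(1-V)b(d+e)]$, whose leading coefficient $V=\alpha_1$ is fixed. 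By Theorem \ref{simpnegeig} the three roots of $P$ are real, negative and simple, which I will use to describe pole--zero cancellations below.

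The analysis then splits according to whether $P$ and $Q$ are coprime. In the coprime (generic) case $Q/P$ is irreducible, and the cross-multiplied identity $Q_x P=Q P_x$ together with $\gcd(P,Q)=1$ forces $P\mid P_x$; since both are monic of degree three, $P_x=P$, whence $Q_x=Q$. This is a system of five polynomial equations in $(\xa,\xb,\xc,\xd,\xe)$, and the key step is univariate elimination: setting $\sigma=\xc+\xd+\xe$, the two coefficient equations coming from $Q$ together with the $X^2$- and constant-coefficient equations of $P$ determine $\xb$, $\xa$, the product $\xc\xe$ and $\xc$ (hence $\xe$ and $\xd$) as rational functions of $\sigma$, while the remaining $X^1$-coefficient equation of $P$ collapses to a single cubic in $\xa$. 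Because the true parameter $a$ is always a solution it is a root of this cubic; factoring it out leaves a quadratic whose (at most two) roots are the only candidates for a spurious second solution. The main obstacle is the ensuing positivity bookkeeping: deciding, for each root, whether the induced $\xb,\xc,\xd,\xe$ all lie in $\Rbb_+^*$. This is exactly where the sign of $2V-1$ enters, since the elimination makes $\xa$ the affine function $\xa=\mathrm{const}+\frac{2V-1}{1-V}\sigma$ of $\sigma$: for $V\ge 1/2$ the positivity constraints are incompatible with the second root, so only the true solution survives and the model is globally identifiable (the boundary $V=1/2$ is even easier, as the cubic degenerates to a linear equation with a unique root). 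For $V<1/2$ the second root can be admissible, and a finite case distinction---organized exactly as the subcases of lemma \ref{locunipqcoprime}---is needed to decide, at each admissible point, whether the fibre is a single point (global identifiability) or two points (local but not global identifiability).

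The remaining, non-generic case is when $P$ and $Q$ share a root, i.e.\ a pole--zero cancellation; here $Q/P$ reduces to a linear-over-quadratic fraction, so the identity $Q_x/P_x=Q/P$ imposes strictly fewer constraints and the solution set becomes positive-dimensional. I would show that this cancellation condition, written as one equation in the true parameters and solved for $a$, is precisely the displayed relation $a=\frac{1-2V}{2V(1-V)}\left((c+d+e)V+(1-V)b\pm\sqrt{\Delta_Q}\right)$, where $\Delta_Q$ is the discriminant of the quadratic $Q$ (note that its $X$-coefficient $(c+d+e)V+(1-V)b$ is exactly the term appearing in the formula, and the factor $\frac{1}{2V}$ from a root of $Q$ combines with the $\frac{1-2V}{1-V}$ factor from the $\xa$--$\sigma$ relation to produce the stated coefficient); on this locus the fibre is a curve in $\Rbb^6$, so the model is neither locally nor globally identifiable there. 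Assembling the three regimes---$V\ge 1/2$ globally identifiable; $V<1/2$ off the cancellation locus locally identifiable with one or two global solutions according to lemma \ref{locunipqcoprime}; and the cancellation locus giving a curve of solutions---yields the theorem. I expect the positivity case analysis in the coprime regime to be the hard part: the algebra is elementary but the enumeration of sign conditions guaranteeing admissibility of the second root is lengthy, which is why it is relegated to the appendix.
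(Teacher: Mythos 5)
Your proposal matches the paper's own proof in all essentials: reduction via Corollary \ref{coreqratfrac} to the rational-fraction identity $Q_x/P_x=Q/P$ with $k$ determined outright, a split according to coprimality of $P$ and $Q$ (the cancellation locus being exactly the condition that $-\frac{1-V}{1-2V}a$ is a root of $Q$, which yields the displayed relation for $a$ and is vacuous for $V\geq1/2$), elimination in the coprime case to a cubic in $\xa$ whose root $\xa=a$ is the only positive one when $V\geq1/2$ and whose two extra roots $\xa^\pm$ drive the case distinction of lemma \ref{locunipqcoprime} when $V<1/2$, and a positive-dimensional (curve) solution set in the non-coprime case. The pieces you defer --- the sign bookkeeping showing that at most one of the two spurious candidates $(\xapm,\xbpm,\xcpm,\xdpm,\xepm)$ can be admissible, so the fibre never exceeds two points, and the explicit parametrization of the curve --- are precisely the computations the paper relegates to its appendix, so your plan is sound and follows the same route.
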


\section{Conclusions}
This paper describes the identifiability problem for a general $n$-compartment systems and provides some precise uniqueness results in the case of the two- and three-compartment catenary models. We think that the main advantage of our approach is its notable degree of generality, which allows a rather unified approach to the definition of the model and, particularly, to the proof of the uniqueness results. Further generalization of the identifiability results to n-compartment models, with $n>3$, are certainly difficult although some uniqueness conditions can be probably formulated in the case of low dimension non-catenary models and catenary models with higher dimension.
\par
A second paper concerned with two- and three-compartment catenary models is under preparation, which will address the inverse problem of numerically determining the tracer coefficients by means of a general Newton regularized scheme. In this paper applications concerning cerebral, hepatic, and renal functions will be considered, involving experimental measurements acquired by means of a PET system for small animals. Further investigation will deal with the formulation of a compartmental model in which the tracer coefficients are local parameters and therefore the numerical solution of the inverse problem leads to the construction of parametric images.

\section*{Acknowledgments}
This work was supported by a grant funded by Liguria Region, PO CRO FSE 2007/13 Asse IV for the project 'Optimization of spatial resolution for the analysis of physiological variables in small animals--Data acquisition and interpretation in micro-PET.' We thank the anonymous referees for their suggestions that helped to improve this work.
\section*{References}
\providecommand{\newblock}{}

\renewcommand{\thesection}{\Alph{section}}
\setcounter{section}{0}
\section{APPENDIX}\label{appendix}
In order to prove theorem \ref{identifiability-3-theorem} we first need to 
search for the matrices $M_x\in M_3(\Rbb)$
\bes
M_x=\pmatrix{-(\xa+\xb)&\xc&0\cr
\xb&-(\xc+\xd)&\xe\cr
0&\xd&-\xe\cr},
\ees
where $\xa,\xb,\xc,\xd,\xe\in\Rbb^*_+$, such that $F_x=F$, with $F=Q/P$ and 
$F_x=Q_x/P_x$, where $P$ is the characteristic polynomial of $M$, $P_x$ is the 
characteristic polynomial of $M_x$ and $Q,Q_x$ are defined by
\bes
Q(X)=\alpha^T\adj(X-M)\eone,\qquad Q_x(X)=\alpha^T\adj(X-M_x)\eone.
\ees
That is
\beas
\fl P(X)=X^3+(a+b+c+d+e)X^2+(ac+ad+ae+bd+be+ce)X+ace,\\
\fl Q(X)=VX^2+(V(c+d+e)+(1-V)b)X+(Vce+(1-V)(d+e)b),\\
\fl P_x(X)=X^3+(\xa+\xb+\xc+\xd+\xe)X^2\\
+(\xa\xc+\xa\xd+\xa\xe+\xb\xd+\xb\xe+\xc\xe)X+\xa\xc\xe,\\
\fl Q_x(X)=VX^2+(V(\xc+\xd+\xe)+(1-V)\xb)X+(V\xc\xe+(1-V)(\xd+\xe)\xb).
\eeas
We first begin with some preliminary results about the polynomials $P$ and $Q$ 
and the rational fraction $F$. The discriminant of $Q$ is given by
\beas
\Delta_Q&=(V(c+d+e)+(1-V)b)^2-4V(Vce+(1-V)(d+e)b),\\
&=((-c+d+e)V-(1-V)b)^2+4cdV^2,
\eeas
thus $0<\Delta_Q<V(c+d+e)+(1-V)b$. Hence $Q$ has two distinct real negative roots
\bes
\mu_q=-\frac{(c+d+e)V+(1-V)b+(-1)^q \sqrt{\Delta_Q}}{2V}, \qquad q\in\{1,2\},
\ees
Hence, the rational fraction $F$ can be written as
\bes
F(X)=\frac{V(X-\mu_1)(X-\mu_2)}{(X-\lambda_1)(X-\lambda_2)(X-\lambda_3)},
\ees
where $\lambda_1<\lambda_2<\lambda_3<0$ are the eigenvalues of $M$. In the 
following, we will need to know the irreducible form of the rational fraction 
$F$, that is, the number of roots common to $P$ and $Q$. For the moment, we have 
three possibilities.
\begin{enumerate}[label={\ding{\arabic*}},start=172]
\item$P$ and $Q$ have no common root, \textit{i.e.} they are coprime, so that 
$F$ is irreducible,
\item$P$ and $Q$ have one common root $\lambda_p=\mu_q, p\in\{1,2,3\}, 
q\in\{1,2\}$, hence
\bes 
\fl 
F=\frac{V(X-\mu_{q'})}{(X-\lambda_{p'})(X-\lambda_{p''})},\qquad\mbox{where } p',
p''\in\{1,2,3\}\setminus\{p\}, q'\in\{1,2\}\setminus\{q\},
\ees
\item$P$ and $Q$ have two common roots, \textit{i.e.} $Q|P$, 
$\mu_1=\lambda_p, \mu_2=\lambda_{p'}, p,p'\in\{1,2,3\}, p\neq p'$, hence
\bes 
F=\frac{V}{X-\lambda_{p''}},\qquad\mbox{where }p''\in\{1,2,3\}\setminus\{p,p'\}.
\ees
\end{enumerate}
We observe that $P$ and $Q$ are not coprime if and only if their resultant 
$\res(P,Q)$ is $0$. 
The resultant $\res(P,Q)$ of $P$ and $Q$ is given by
\beas
\fl\res(P,Q)=-b^2cd\left(V(1-V)^2a^2-(1-V)(1-2V)(V(c+d+e)+(1-V)b)a\right.\\
+\left.(1-2V)^2(Vce+(1-V)(d+e)b)\right).
\eeas
In particular, we can see that if $V=1/2$, $\res(P,Q)=-a^2b^2cd/8\neq 0$, 
otherwise, it can be easily remarked that
\bes
\res(P,Q)=-b^2cd(1-2V)^2Q\left(-\frac{1-V}{1-2V}a\right).
\ees
Hence, for $V\neq1/2$, $\res(P,Q)=0$ if and only if $-\frac{1-V}{1-2V}a$ is a 
root of $Q$. In particular, if $V>1/2$, then $-\frac{1-V}{1-2V}a>0$. However, 
the roots of $Q$ are negative, thus $\res(P,Q)\neq0$. We recall that the same 
result holds for $V=1/2$, thus we have the following lemma
\begin{lemma}\label{irpqvoh}
If $V\geq1/2$, then $P$ and $Q$ are coprime so that $F$ is irreducible.
\end{lemma}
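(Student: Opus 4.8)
The plan is to reduce the coprimality of $P$ and $Q$ to the non-vanishing of their resultant, and then to exploit the factored expression for $\res(P,Q)$ recorded above together with the sign information on the roots of $Q$.

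First I would recall the standard fact that two polynomials over a field share a common root if and only if their resultant vanishes; hence $P$ and $Q$ are coprime precisely when $\res(P,Q)\neq0$. It therefore suffices to show $\res(P,Q)\neq0$ whenever $V\geq1/2$. I would then split into the two cases already isolated above. When $V=1/2$ the explicit value $\res(P,Q)=-a^2b^2cd/8$ is manifestly nonzero, since $a,b,c,d\in\Rbb_+^*$. When $V\neq1/2$ I would invoke the factorisation
\[
\res(P,Q)=-b^2cd(1-2V)^2\,Q\!\left(-\frac{1-V}{1-2V}a\right),
\]
in which the prefactor $-b^2cd(1-2V)^2$ is nonzero because $b,c,d>0$ and $V\neq1/2$. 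Thus $\res(P,Q)=0$ if and only if the point $-\frac{1-V}{1-2V}a$ is a root of $Q$.

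The crux is then an elementary sign argument. For $V>1/2$, since $V\in(0,1)$ we have $1-V>0$, while $1-2V<0$; as $a>0$ this forces $-\frac{1-V}{1-2V}a>0$. But the two roots $\mu_1,\mu_2$ of $Q$ are both strictly negative, as established from $0<\Delta_Q<V(c+d+e)+(1-V)b$ above, so a strictly positive number cannot be a root of $Q$, whence $\res(P,Q)\neq0$. Combining with the case $V=1/2$, we conclude $\res(P,Q)\neq0$ for all $V\geq1/2$, so $P$ and $Q$ are coprime and $F=Q/P$ is in irreducible form.

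I expect no serious obstacle: the only quantitative input, namely the formula for the resultant and its factorisation through $Q\!\left(-\frac{1-V}{1-2V}a\right)$, is already in hand, and the remaining work is the sign analysis of $-\frac{1-V}{1-2V}a$ against the negativity of the roots of $Q$. The one place demanding care is verifying that the prefactor does not itself vanish and that the positivity hypotheses $a,b,c,d>0$ and $V\in(0,1)$ are used exactly where needed; in particular the restriction $V<1$ is what guarantees $1-V>0$ and hence the correct sign of the test point.
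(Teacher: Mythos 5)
Your proposal is correct and follows essentially the same route as the paper: both reduce coprimality to the non-vanishing of $\res(P,Q)$, handle $V=1/2$ via the explicit value $-a^2b^2cd/8$, and for $V>1/2$ use the factorisation $\res(P,Q)=-b^2cd(1-2V)^2\,Q\!\left(-\frac{1-V}{1-2V}a\right)$ together with the observation that $-\frac{1-V}{1-2V}a>0$ while both roots of $Q$ are strictly negative. The sign analysis and the use of the hypotheses $a,b,c,d>0$, $V\in(0,1)$ are exactly as in the paper's argument.
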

If now $V<1/2$ and $-\frac{1-V}{1-2V}a$ is a root of $Q$, that is
\bes
a=\frac{1-2V}{2V(1-V)}\left((c+d+e)V+(1-V)b+(-1)^q \sqrt{\Delta_Q}\right), 
\qquad q\in\{1,2\},
\ees
then $P(X)=(X-\mu_q)\breve{P}(X)$, where
\bes
\fl\breve{P}(X)=X^2+\frac{(2-3V)(c+d+e)+(1-V)b-(-1)^q\sqrt{\Delta_Q}}{2(1-V)}
X+\frac{1-2V}{1-V}ce,
\ees
that is
\bes
\breve{P}(X)=X^2+\left(-\frac{V}{1-2V}a+b-\frac{1-2V}{1-V}(c+d+e)\right)X+\frac{
1-2V}{1-V}ce.
\ees
Since $Q(X)=(X-\mu_q)\breve{Q}(X)$, with
\bes
\breve{Q}(X)=V\left(X+\frac{(c+d+e)V+(1-V)b-(-1)^q \sqrt{\Delta_Q}}{2V}\right),
\ees
that is
\bes
\breve{Q}(X)=V\left(X-\frac{1-V}{1-2V}a+\frac{1-V}{V}b+c+d+e\right),
\ees
we have
\bes
F(X)=\frac{V\left(X-\frac{1-V}{1-2V}a+\frac{1-V}{V}b+c+d+e\right)}{
X^2+\left(-\frac{V}{1-2V}a+b-\frac{1-2V}{1-V}(c+d+e)\right)X+\frac{1-2V}{1-V}ce}
.
\ees
Thus $P$ and $Q$ have two commons roots if and only if $\breve{P}(\mu_{q'})=0$ 
where $q'\in\{1,2\}\setminus\{q\}$. However
\beas
\breve{P}(\mu_{q'})&=\frac{1-2V}{V^2}\left(bV+(1-V)d-\frac{V(1-V)}{1-2V}e\right)
,\\
&=-\frac{1-2V}{2V^2}((-c+d+e)V-(1-V)b+(-1)^q\sqrt{\Delta_Q}).
\eeas
Since $\Delta_Q=((-c+d+e)V-(1-V)b)^2+4cdV^2$, we have 
$\sqrt{\Delta_Q}>|(-c+d+e)V-(1-V)b|$, hence $\breve{P}(\mu_{q'})\neq 0$. We thus 
have the following lemma
\begin{lemma}\label{notcoprime}
If $V<1/2$, $P$ and $Q$ have at most one common root. This happens if and only 
if $-\frac{1-V}{1-2V}a$ is a root of $Q$, that is
\bes
a=\frac{1-2V}{2V(1-V)}\left((c+d+e)V+(1-V)b+(-1)^q \sqrt{\Delta_Q}\right), 
\qquad q\in\{1,2\},
\ees
where
\beas
\Delta_Q&=(V(c+d+e)+(1-V)b)^2-4V(Vbc+(1-V)(d+e)b),\\
&=((-c+d+e)V-(1-V)b)^2+4cdV^2.
\eeas
In this case, we have $F=\breve{Q}/\breve{P}$, where $\breve{P}$ and 
$\breve{Q}$ are coprime and
\beas
\breve{P}(X)=X^2+\left(-\frac{V}{1-2V}a+b-\frac{1-2V}{1-V}(c+d+e)\right)X+\frac{
1-2V}{1-V}ce,\\
\breve{Q}(X)=V\left(X-\frac{1-V}{1-2V}a+\frac{1-V}{V}b+c+d+e\right).
\eeas
\end{lemma}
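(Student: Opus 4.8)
The plan is to convert the common-root question into a statement about the resultant and then to analyse it through the factorization already established. Since $P$ and $Q$ are not coprime precisely when $\res(P,Q)=0$, and since the resultant has been computed in the factored form
\bes
\res(P,Q)=-b^2cd(1-2V)^2Q\left(-\frac{1-V}{1-2V}a\right),
\ees
with $b,c,d>0$ and $(1-2V)^2>0$ for $V<1/2$, I would first deduce that $P$ and $Q$ share a root if and only if $Q\left(-\frac{1-V}{1-2V}a\right)=0$, i.e. $-\frac{1-V}{1-2V}a$ is a root of $Q$. Writing the two roots of $Q$ as $\mu_q$ in terms of $\Delta_Q$ and solving $-\frac{1-V}{1-2V}a=\mu_q$ for $a$ then yields the stated closed form for $a$.

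Next I would pin down the common root and the reduced fractions. When a common root exists it is, being a root of $Q$, one of the $\mu_q$, and the equation $-\frac{1-V}{1-2V}a=\mu_q$ selects which one. Dividing $P$ by the linear factor $X-\mu_q$ produces a quadratic quotient; substituting the explicit coefficients of $P$ and the value of $a$ into the division yields $\breve{P}$, while the analogous factorization $Q=(X-\mu_q)\breve{Q}$ gives the linear factor $\breve{Q}$. Cancelling the common factor then gives $F=Q/P=\breve{Q}/\breve{P}$.

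The decisive step is the claim that there is at most one common root, equivalently that $\breve{P}$ and $\breve{Q}$ are coprime. A second common root would force the remaining root $\mu_{q'}$ of $Q$ (with $q'\neq q$) to be a root of $P$ as well, i.e. $\breve{P}(\mu_{q'})=0$; thus it suffices to evaluate $\breve{P}$ at $\mu_{q'}$ and show it does not vanish. Carrying out this evaluation gives
\bes
\breve{P}(\mu_{q'})=-\frac{1-2V}{2V^2}\left((-c+d+e)V-(1-V)b+(-1)^q\sqrt{\Delta_Q}\right),
\ees
and here I would invoke the identity $\Delta_Q=((-c+d+e)V-(1-V)b)^2+4cdV^2$: since $c,d,V>0$ the extra term $4cdV^2$ is strictly positive, giving $\sqrt{\Delta_Q}>\left|(-c+d+e)V-(1-V)b\right|$, so the bracket cannot vanish and $\breve{P}(\mu_{q'})\neq0$. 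This rules out a second common root and simultaneously establishes coprimality of $\breve{P}$ and $\breve{Q}$. I expect this nonvanishing estimate to be the only genuine obstacle: the resultant reduction and the two polynomial divisions are routine, whereas the strict inequality for $\sqrt{\Delta_Q}$ is exactly where the strict positivity of the catenary rates $c,d$ and of $V$ is used in an essential way.
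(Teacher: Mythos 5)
Your proposal is correct and follows essentially the same route as the paper: the factored resultant identity $\res(P,Q)=-b^2cd(1-2V)^2\,Q\!\left(-\frac{1-V}{1-2V}a\right)$ to characterize when a common root exists, the explicit divisions $P=(X-\mu_q)\breve{P}$ and $Q=(X-\mu_q)\breve{Q}$, and the exclusion of a second common root by evaluating $\breve{P}(\mu_{q'})$ and invoking $\sqrt{\Delta_Q}>\left|(-c+d+e)V-(1-V)b\right|$ from the identity $\Delta_Q=((-c+d+e)V-(1-V)b)^2+4cdV^2$. Your added observation that the nonvanishing of $\breve{P}(\mu_{q'})$ also yields the coprimality of $\breve{P}$ and $\breve{Q}$ (since $\mu_{q'}$ is the sole root of the linear $\breve{Q}$) is exactly what the paper uses implicitly.
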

We are now ready to seek the solutions of the inverse problem and study the 
identifiability of the system. We first treat the case where $P$ and $Q$ are 
coprime and finish the study with the case where they are not.
\subsection{Coprimality of $P$ and $Q$}
We suppose here that $P$ and $Q$ are coprime. Since $F$ is irreducible and 
$\deg P_x = \deg P =3$ and $\deg Q_x = \deg Q =2$, $F_x=F$ only if the rational 
fraction $F_x$ is irreducible too. In addition, since the leading coefficients 
of $P_x$ and $P$ are identical, as well as those of $Q_x$ and $Q$, $F_x=F$ if 
and only if $P_x=P$ and $Q_x=Q$, that is

\bse\label{eq}\bea
\fl V(\xc+\xd+\xe)+(1-V)\xb=V(c+d+e)+(1-V)b,\label{eq1}\\
\fl V\xc\xe+(1-V)(\xd+\xe)\xb=Vce+(1-V)(d+e)b,\label{eq2}\\
\fl\xa+\xb+\xc+\xd+\xe=a+b+c+d+e,\label{eq3}\\
\fl(\xc+\xd+\xe)\xa+(\xd+\xe)\xb+\xc\xe=(c+d+e)a+(d+e)b+ce,\label{eq4}\\
\fl\xa\xc\xe=ace.
\eea\ese
Doing \eref{eq1}$-V$\eref{eq3} to replace \eref{eq1}, followed by  
$(1-2V)$\eref{eq3}$-$\eref{eq1} to replace \eref{eq3}, then by  
$(1-V)$\eref{eq4}$-$\eref{eq2} to replace \eref{eq4}, the system \eref{eq}  is 
equivalent to
\bse\label{eqq}\bea
\fl-V\xa+(1-2V)\xb=-Va+(1-2V)b,\\
\fl V\xc\xe+(1-V)(\xd+\xe)\xb=Vce+(1-V)(d+e)b,\\
\fl(\xc+\xd+\xe)+\frac{1-V}{1-2V}\xa=(c+d+e)+\frac{1-V}{1-2V}a,\label{eqq3}\\
\fl(\xc+\xd+\xe)\frac{1-V}{1-2V}\xa+\xc\xe= 
(c+d+e)\frac{1-V}{1-2V}a+ce,\label{eqq4}\\
\fl\xc\xe\frac{1-V}{1-2V}\xa=ce\frac{1-V}{1-2V}a.\label{eqq5}
\eea\ese
The equations \eref{eqq3}, \eref{eqq4}, \eref{eqq5} are verified if and only if 
$\frac{\xc+\xd+\xe\pm\sqrt{(\xc+\xd+\xe)^2-4\xc\xe}}{2}$ and 
$\frac{1-V}{1-2V}\xa$ are the roots of the polynomial $R$ of degree $3$
\bes
\fl R(X)=\left(X-\frac{1-V}{1-2V}a\right)\breve{R}(X),\qquad\mbox{where 
}\breve{R}(X)=X^2-(c+d+e)X+ce.
\ees
Hence, the system \eref{eqq} is equivalent to the one obtained by including the 
equation $R\left(\frac{1-V}{1-2V}\xa\right)=0$, although the new system is 
redundant considering equations \eref{eqq3}, \eref{eqq4}, \eref{eqq5} and the 
newly included one. The system \eref{eqq} is then equivalent to
\bse\label{eqqq}\bea
-V\xa+(1-2V)\xb=-Va+(1-2V)b,\\
V\xc\xe+(1-V)(\xd+\xe)\xb=Vce+(1-V)(d+e)b,\\
(\xc+\xd+\xe)+\frac{1-V}{1-2V}\xa=(c+d+e)+\frac{1-V}{1-2V}a,\\
(\xc+\xd+\xe)\frac{1-V}{1-2V}\xa+\xc\xe=(c+d+e)\frac{1-V}{1-2V}a+ce,\\
\xc\xe\frac{1-V}{1-2V}\xa=ce\frac{1-V}{1-2V}a,\label{eqqq5}\\
R\left(\frac{1-V}{1-2V}\xa\right)=0.\label{eqqq6}
\eea\ese
The solution $\xa=a$ to the equation \eref{eqqq6} leads to 
$\xb=b,\xc=c,\xd=d,\xe=e$, that is $M_x=M$. In addition, since 
$0<(-c+d+e)^2+4cd=(c+d+e)^2-4ce<(c+d+e)^2$, the roots of $\breve{R}$ are 
distinct, real and positive. However, for $V>1/2$ and $\xa>0$, we have 
$\frac{1-V}{1-2V}\xa<0$, then $\breve{R}\left(\frac{1-V}{1-2V}\xa\right)\neq 0$ 
and the unique solution of \eref{eqqq6} is $\xa=a$, so that the unique solution 
of the inverse problem is $M_x=M$. We recall that we had the same result for 
$V=1/2$. In addition, we also recall that according to lemma \ref{irpqvoh}, $P$ 
and $Q$ are always coprime for $V\geq1/2$, thus, we have the following lemma
\begin{lemma}
If $V\geq 1/2$, the inverse problem has a unique solution.
\end{lemma}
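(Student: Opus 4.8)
The plan is to turn the inverse problem into the polynomial identity $F_x=F$ and then to exhaust its solutions using the coprimality supplied by lemma \ref{irpqvoh} together with a positivity (sign) argument. First I would dispose of $k$: the preceding lemma, obtained from corollary \ref{coreqratfrac} with $C_{1\mathrm{e}}$ bounded and not identically zero, already pins down $k$ uniquely, so it remains only to recover $(a,b,c,d,e)$ from the rational fraction $F=Q/P$. For $V\geq1/2$, lemma \ref{irpqvoh} gives that $P$ and $Q$ are coprime, hence $F$ is in irreducible form; since the resultant computation behind that lemma uses only positivity of the parameters, the same holds for $\xa,\dots,\xe>0$, so $F_x=Q_x/P_x$ is irreducible as well. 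As $\deg P_x=\deg P=3$, $\deg Q_x=\deg Q=2$, $P,P_x$ are monic and $Q,Q_x$ both have leading coefficient $V$, the equality of two irreducible fractions forces $P_x=P$ and $Q_x=Q$. This is exactly the coefficient system \eref{eq}, and the whole lemma reduces to showing that \eref{eq} has $(\xa,\xb,\xc,\xd,\xe)=(a,b,c,d,e)$ as its only solution in ${\Rbb_+^*}^5$.

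Next I would recombine \eref{eq} linearly into the equivalent system \eref{eqq}. The reason for this particular rearrangement is structural: its last three equations \eref{eqq3}, \eref{eqq4}, \eref{eqq5} are precisely Vieta's relations stating that $\frac{1-V}{1-2V}\xa$, together with the two roots of $X^2-(\xc+\xd+\xe)X+\xc\xe$, are the three roots of the fixed cubic $R(X)=\left(X-\frac{1-V}{1-2V}a\right)\breve R(X)$ with $\breve R(X)=X^2-(c+d+e)X+ce$. In particular $\frac{1-V}{1-2V}\xa$ must itself be a root of $R$, which is the redundant equation \eref{eqqq6}. Setting up this cubic, and recognising the three Vieta identities inside \eref{eqq}, is the step I expect to be the real obstacle: it is where one must find the right reformulation that isolates $\xa$ and brings the sign of $\frac{1-V}{1-2V}$ into play, rather than any single hard computation.

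With $R$ in hand the finish is short. Since $0<(-c+d+e)^2+4cd=(c+d+e)^2-4ce<(c+d+e)^2$, the roots of $\breve R$ are real, distinct and positive. For $V>1/2$ one has $\frac{1-V}{1-2V}<0$, so $\frac{1-V}{1-2V}\xa<0$ for every admissible $\xa>0$; it therefore cannot equal a positive root of $\breve R$ and must coincide with the remaining root $\frac{1-V}{1-2V}a$ of $R$, forcing $\xa=a$. Back-substitution then propagates: the first equation of \eref{eqq} gives $(1-2V)\xb=(1-2V)b$, hence $\xb=b$; \eref{eqq3} and \eref{eqq5} give $\xc+\xd+\xe=c+d+e$ and $\xc\xe=ce$; \eref{eq2} then yields $\xd+\xe=d+e$, whence successively $\xc=c$, $\xe=e$, $\xd=d$. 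The boundary value $V=1/2$ must be handled apart, since $\frac{1-V}{1-2V}$ is undefined there, but it is easier: subtracting \eref{eq1} from \eref{eq3} immediately gives $\xa=a$, after which the same back-substitution applies. Collecting the two cases, and recalling that lemma \ref{irpqvoh} guarantees coprimality throughout the range $V\geq1/2$, establishes that the inverse problem has the unique solution $M_x=M$.
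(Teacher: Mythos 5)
Your proposal is correct and follows essentially the same route as the paper's own proof: coprimality of $P,Q$ (and of $P_x,Q_x$) from lemma \ref{irpqvoh} forces $P_x=P$ and $Q_x=Q$, the recombined system \eref{eqq} is read as Vieta's relations for the cubic $R(X)=\left(X-\frac{1-V}{1-2V}a\right)\breve{R}(X)$, and for $V>1/2$ the sign of $\frac{1-V}{1-2V}\xa<0$ excludes the two positive roots of $\breve{R}$, forcing $\xa=a$ and then $M_x=M$ by back-substitution. If anything, your explicit separate treatment of $V=1/2$ (where $\frac{1-V}{1-2V}$ is undefined, so \eref{eqq} cannot be used and one must argue directly from \eref{eq}) is more careful than the paper's terse remark that the same result was already obtained for $V=1/2$.
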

Suppose now that $V<1/2$, then \eref{eqqq6} has at least two distinct solutions 
since the roots of $\breve{R}$ are distinct and at most three distinct 
solutions. Consider for the moment the solutions of \eref{eqqq} in $\Rbb^5$ 
regardless of their sign and denote by 
$\xa^\pm=\frac{1-2V}{2(1-V)}\left(c+d+e\pm\sqrt{(c+d+e)^2-4ce}\right)$ the 
solutions of $\breve{R}\left(\frac{1-V}{1-2V}\xa\right)=0$. These solutions lead 
to two solutions $(\xapm,\xbpm,\xcpm,\xdpm,\xepm)$ to \eref{eqqq}, provided 
$b+\frac{V}{1-2V}(\xapm-a)\neq0$ and 
$(1-2V)bc-(1-V)(\xapm-a)\left(b-\frac{V}{1-2V}a\right)\neq0$, they are given by
\bse\label{solpm}\bea
\xapm=\frac{1-2V}{2(1-V)} \left(c+d+e\pm\sqrt{(c+d+e)^2-4ce}\right),\\
\xbpm=b+\frac{V}{1-2V}(\xapm-a),\\
\xcpm=\frac{(1-2V)bc-(1-V)(\xapm-a)\left(b-\frac{V}{1-2V}a\right)}{(1-2V)\xbpm},
\\
\xdpm=\frac{bQ\left(-\frac{1-V}{1-2V}a\right)\left(\frac{1-2V}{1-V}
c-\xapm\right)}{(1-2V){\xbpm}^2\xcpm},\\
\xepm=a\left(\frac{1-V}{1-2V}\right)^2\frac{\xamp}{\xcpm}.
\eea\ese
Remark that if both solutions $(\xap,\xbp,\xcp,\xdp,\xep)$ and 
$(\xam,\xbm,\xcm,\xdm,\xem)$ exist, we have
\bes
\xcp\xdp\xcm\xdm=-16cde^2(1-V)^2\frac{\Xi^2}{\Upsilon^2},
\ees
where
\beas
\fl\Xi=a(c+d+e)(1-V)(1-2V)^2-b(d+e)(1-V)(1-2V)^2+ab(1-2V)(1-V)^2\\
-ceV(1-2V)^2-a^2V(1-V)^2,
\eeas
and
\beas
\fl\Upsilon=\left((c+d+e)V(1-2V)+2b(1-V)(1-2V)-2aV(1-V)\right)^2\\
-V^2(1-2V)^2\left((c+d+e)^2-4ce\right)^2.
\eeas
Hence $\xcp\xdp\xcm\xdm\leq0$ and we have $\xcp\xdp\xcm\xdm=0$ if and only if 
$\xdp=0$ or $\xdm=0$, otherwise $\xcp\xdp\xcm\xdm<0$. Moreover, since $\xapm>0$, 
according to equation \eref{eqqq5}, $\xcp$ and $\xep$ have the same sign as well 
as $\xcm$ and $\xem$. In addition, $\xcp$, $\xdp$, $\xep$ are not all negative 
and $\xcm$, $\xdm$, $\xem$ neither, since 
$\frac{\xcpm+\xdpm+\xepm\pm\sqrt{(\xcpm+\xdpm+\xepm)^2-4\xcpm\xepm}}{2}$ are 
roots of the polynomial $R$, thus positive. Hence, if $\xdpm\neq0$, one and only 
one of $\xcp$, $\xdp$, $\xcm$, $\xdm$ is negative so that we have one of the two 
exclusive cases: $\xcp>0$ and $\xdp>0$ or $\xcm>0$ and $\xdm>0$. Consequently, 
the inverse problem has at most two solutions. More precisely, considering 
\eref{solpm} and the previous remark, we have the following lemma
\begin{lemma}\label{locunipqcoprime}
If $V<1/2$ and the polynomials $P$ and $Q$ are coprime,
that is
\bes
a\neq\frac{1-2V}{2V(1-V)}\left((c+d+e)V+(1-V)b\pm\sqrt{\Delta_Q}\right),
\ees
then, we have one of the exclusive cases:
\ListProperties(FinalMark=,Style**={$\!\!$)$\,\,$})
\begin{easylist}
§ \label{1} $\frac{V}{1-2V}\xap\leq\frac{V}{1-2V}a-b$. Then the inverse problem 
has a unique solution: $(a,b,c,d,e)$.
§ $\frac{V}{1-2V}\xam\leq\frac{V}{1-2V}a-b<\frac{V}{1-2V}\xap$ and
§§ \label{2.1} $(1-2V)bc-(1-V)(\xap-a)\left(b-\frac{V}{1-2V}a\right)\leq0$. 
Then the inverse problem has a unique solution: $(a,b,c,d,e)$.
§§ $0<(1-2V)bc-(1-V)(\xap-a)\left(b-\frac{V}{1-2V}a\right)$ and
§§§ \label{2.2.1} 
$\left((-c+d+e)V-(1-V)b\right)^2+4cdV^2<\left((c+d+e)V+(1-V)b-\frac{2V(1-V)}{
1-2V}a\right)^2$. Then the inverse problem has a unique solution: $(a,b,c,d,e)$.
§§§ $\left((c+d+e)V+(1-V)b-\frac{2V(1-V)}{1-2V}
a\right)^2<\left((-c+d+e)V-(1-V)b\right)^2+4cdV^2$ and
§§§§ \label{2.2.2.1} $a=\xap$. Then the inverse problem has a unique solution: 
$(a,b,c,d,e)$.
§§§§ $a\neq\xap$. Then the inverse problem has two solutions: $(a,b,c,d,e)$ and 
$(\xap,\xbp,\xcp,\xdp,\xep)$.
§ $\frac{V}{1-2V}a-b<\frac{V}{1-2V}\xam$ and
§§ $b=\frac{aV(1-V)(\xap-a)}{(1-2V)\left((1-V)(\xap-a)-(1-2V)c)\right)}$ and
§§§ \label{3.1.1} $b-\frac{V}{1-2V}a\leq0$. Then the inverse problem has a 
unique solution: $(a,b,c,d,e)$.
§§§ $0<b-\frac{V}{1-2V}a$ and
§§§§ 
$\left((-c+d+e)V-(1-V)b\right)^2+4cdV^2<\left((c+d+e)V+(1-V)b-\frac{2V(1-V)}{
1-2V}a\right)^2$ and
§§§§§ \label{3.1.2.1.1} $a=\xam$. Then the inverse problem has a unique 
solution: $(a,b,c,d,e)$.
§§§§§ $a\neq\xam$. Then the inverse problem has two solutions: $(a,b,c,d,e)$ 
and $(\xam,\xbm,\xcm,\xdm,\xem)$.
§§§§ \label{3.1.2.2} $\left((c+d+e)V+(1-V)b-\frac{2V(1-V)}{1-2V}
a\right)^2<\left((-c+d+e)V-(1-V)b\right)^2+4cdV^2$. Then the inverse problem 
has a unique solution: $(a,b,c,d,e)$.
§§ $b=\frac{aV(1-V)(\xam-a)}{(1-2V)\left((1-V)(\xam-a)-(1-2V)c)\right)}$ and
§§§ \label{3.2.1} $0\leq b-\frac{V}{1-2V}a$. Then the inverse problem has a 
unique solution: $(a,b,c,d,e)$.
§§§ $b-\frac{V}{1-2V}a<0$ and
§§§§ \label{3.2.2.1} 
$\left((-c+d+e)V-(1-V)b\right)^2+4cdV^2<\left((c+d+e)V+(1-V)b-\frac{2V(1-V}{1-2V
}a\right)^2$. Then the inverse problem has a unique solution: $(a,b,c,d,e)$.
§§§§ $\left((c+d+e)V+(1-V)b-\frac{2V(1-V}{1-2V}
a\right)^2<\left((-c+d+e)V-(1-V)b\right)^2+4cdV^2$ and
§§§§§ \label{3.2.2.2.1} $a=\xap$. Then the inverse problem has a unique 
solution: $(a,b,c,d,e)$.
§§§§§ $a\neq\xap$. Then the inverse problem has two solutions: $(a,b,c,d,e)$ 
and $(\xap,\xbp,\xcp,\xdp,\xep)$.
§§ $b\neq\frac{aV(1-V)(\xapm-a)}{(1-2V)\left((1-V)(\xapm-a)-(1-2V)c)\right)}$ 
and
§§§ 
$\left((c+d+e)V+(1-V)b-\frac{2V(1-V}{1-2V}
a\right)^2<\left((-c+d+e)V-(1-V)b\right)^2+4cdV^2$ and 
$0<(1-2V)bd-(1-V)(\xap-a)\left(b-\frac{V}{1-2V}a\right)$ and
§§§§ \label{3.3.1.1} $a=\xap$. Then the inverse problem has a unique solution: 
$(a,b,c,d,e)$.
§§§§ $a\neq\xap$. Then the inverse problem has two solutions: $(a,b,c,d,e)$ and 
$(\xap,\xbp,\xcp,\xdp,\xep)$.
§§§ 
$\left((-c+d+e)V-(1-V)b\right)^2+4cdV^2<\left((c+d+e)V+(1-V)b-\frac{2V(1-V}{1-2V
}a\right)^2$ and $0<(1-2V)bd-(1-V)(\xam-a)\left(b-\frac{V}{1-2V}a\right)$ and
§§§§ \label{3.3.2.1} $a=\xam$. Then the inverse problem has a unique solution: 
$(a,b,c,d,e)$.
§§§§ $a\neq\xam$. Then the inverse problem has two solutions: $(a,b,c,d,e)$ and 
$(\xam,\xbm,\xcm,\xdm,\xem)$.
\end{easylist}
\end{lemma}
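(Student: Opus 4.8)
The plan is to turn the problem into a finite positivity check on the two explicit candidate tuples already produced in \eref{solpm}. Under coprimality of $P$ and $Q$, the preceding analysis reduces $F_x=F$ to the polynomial system \eref{eqqq}, whose solutions are parametrised by the roots of the cubic $R$ in \eref{eqqq6}: the root $\frac{1-V}{1-2V}\xa=\frac{1-V}{1-2V}a$ recovers the original tuple $(a,b,c,d,e)$, while the two roots of $\breve{R}$ give the candidates $(\xap,\xbp,\xcp,\xdp,\xep)$ and $(\xam,\xbm,\xcm,\xdm,\xem)$ of \eref{solpm}. Since admissible exchange rates must lie in ${\Rbb_+^*}^5$, proving the lemma amounts to deciding, for each candidate, whether every one of its five components is strictly positive and whether the candidate is distinct from $(a,b,c,d,e)$; the inverse problem then has two solutions precisely when exactly one candidate survives both tests.

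First I would record the sign of each component in a fixed order, reading \eref{solpm} from top to bottom. The values $\xapm$ are automatically positive, since $\frac{1-V}{1-2V}\xapm$ are the (positive) roots of $\breve{R}(X)=X^2-(c+d+e)X+ce$ and the prefactor $\frac{1-2V}{2(1-V)}$ is positive for $V<1/2$. Next, $\xbpm>0$ is equivalent to $\frac{V}{1-2V}\xapm>\frac{V}{1-2V}a-b$, and since $\xap>\xam$ forces $\xbp>\xbm$, comparing $\frac{V}{1-2V}a-b$ against the two thresholds $\frac{V}{1-2V}\xapm$ produces the top-level trichotomy of the case tree. Given $\xbpm>0$, the sign of $\xcpm$ is that of its numerator $(1-2V)bc-(1-V)(\xapm-a)(b-\frac{V}{1-2V}a)$; the sign of $\xepm$ coincides with that of $\xcpm$ because \eref{eqqq5} gives $\xcpm\xepm\xapm=cea>0$; and, assuming $\xcpm>0$, the sign of $\xdpm$ factors as $\mathrm{sign}(Q(-\frac{1-V}{1-2V}a))\cdot\mathrm{sign}(\frac{1-2V}{1-V}c-\xapm)$.

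Two short computations then explain why the $\Delta_Q$-versus-square inequalities appear throughout the statement. Writing $S=(c+d+e)V+(1-V)b-\frac{2V(1-V)}{1-2V}a$, a direct evaluation gives $Q(-\frac{1-V}{1-2V}a)=\frac{1}{4V}(S^2-\Delta_Q)$, so this factor is positive exactly when $\Delta_Q<S^2$, and it is nonzero under coprimality by lemma \ref{notcoprime}. The second factor is fixed by $\breve{R}(c)=-cd<0$, which places $c$ strictly between the two roots of $\breve{R}$, hence $\frac{1-2V}{1-V}c-\xap<0$ and $\frac{1-2V}{1-V}c-\xam>0$. Combining, $\xdp>0\iff S^2<\Delta_Q$ and $\xdm>0\iff\Delta_Q<S^2$, matching the alternatives at the leaves of the tree. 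That at most one candidate can ultimately be admissible is forced by the already-derived relation $\xcp\xdp\xcm\xdm\le0$: were both admissible we would have $\xcp\xdp>0$ and $\xcm\xdm>0$, contradicting this inequality. Together with $\mathrm{sign}\,\xepm=\mathrm{sign}\,\xcpm$ and the fact that $\xcpm,\xdpm,\xepm$ cannot all be negative (their symmetric functions attach to the positive roots of $R$), this bounds the number of solutions by two. The remaining equality cases $a=\xapm$, where the surviving candidate collapses onto $(a,b,c,d,e)$, account for the isolated ``unique solution'' leaves such as \ref{2.2.2.1}, \ref{3.1.2.1.1}, \ref{3.2.2.2.1}, \ref{3.3.1.1} and \ref{3.3.2.1}.

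The genuinely delicate part is not any single sign computation but the bookkeeping: making the enumeration simultaneously exhaustive and mutually exclusive, and correctly translating each positivity requirement into the exact inequality written in the lemma. The subtlest branches are the degenerate ones in which a denominator of \eref{solpm} vanishes, namely $\xbpm=0$ and the conditions $b=\frac{aV(1-V)(\xapm-a)}{(1-2V)((1-V)(\xapm-a)-(1-2V)c)}$ that encode $\xcpm=0$; on these loci the closed form \eref{solpm} is invalid and one must return to \eref{eqqq} directly to verify that the corresponding candidate genuinely fails to be admissible. These edge cases, rather than the generic ones, are where the argument must be checked most carefully, and they are exactly the hypotheses distinguishing the inner nodes of the case tree.
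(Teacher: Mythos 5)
Your proposal is correct and takes essentially the same route as the paper's own proof: under coprimality the paper likewise reduces $F_x=F$ to the system \eref{eqqq}, parametrises its solutions by the roots of the cubic $R$, derives the two explicit candidates \eref{solpm}, and reads the case tree off componentwise positivity of \eref{solpm} combined with the sign link between $\xcpm$ and $\xepm$ from \eref{eqqq5} and the exclusivity relation $\xcp\xdp\xcm\xdm\leq0$. Your auxiliary evaluations $Q\left(-\frac{1-V}{1-2V}a\right)=\frac{1}{4V}\left(S^2-\Delta_Q\right)$, with $S=(c+d+e)V+(1-V)b-\frac{2V(1-V)}{1-2V}a$, and $\breve{R}(c)=-cd<0$ correctly supply the sign translations (and the handling of the degenerate loci $\xbpm=0$, $\xcpm=0$) that the paper asserts but leaves implicit.
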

\subsection{Non-coprimality of $P$ and $Q$}
We suppose here that $P$ and $Q$ are not coprime. Then, according to lemma 
\ref{irpqvoh}, $V<1/2$, moreover, according to lemma \ref{notcoprime}, 
$-\frac{1-V}{1-2V}a$ is a root of $Q$, that is
\bes
a=\frac{1-2V}{2V(1-V)}\left((c+d+e)V+(1-V)b+(-1)^q \sqrt{\Delta_Q}\right), 
\qquad q\in\{1,2\},
\ees
where
\beas
\Delta_Q&=(V(c+d+e)+(1-V)b)^2-4V((1-V)(d+e)b+Vce),\\
&=((-c+d+e)V-(1-V)b)^2+4cdV^2.
\eeas
Moreover, $F=\breve{Q}/\breve{P}$ where $\breve{P}$ and $\breve{Q}$ are coprime 
and
\bes
\breve{P}(X)=X^2+\left(-\frac{1-2V}{1-V}(c+d+e)+b-\frac{V}{1-2V}a\right)X+\frac{
1-2V}{1-V}ce,
\ees
\bes
\breve{Q}(X)=V\left(X+c+d+e+\frac{1-V}{V}b-\frac{1-V}{1-2V}a\right).
\ees
$F_x=F$ only if $P_x$ and $Q_x$ are not coprime, that is $-\frac{1-V}{1-2V}\xa$ 
is a root of $Q_x$
\bes
\fl\xa=\frac{1-V}{2V(1-2V)}\left((\xc+\xd+\xe)V+(1-V)\xb+(-1)^{q_x} 
\sqrt{\Delta_{Q_x}}\right), \qquad q_x\in\{1,2\},
\ees
where
\beas
\fl\Delta_{Q_x}&=(V(\xc+\xd+\xe)+(1-V)\xb)^2-4V((1-V)(\xd+\xe)\xb+V\xc\xe),\\
\fl&=((-\xc+\xd+\xe)V-(1-V)\xb)^2+4\xc\xd V^2.
\eeas
So that, $F_x=\breve{Q}_x/\breve{P}_x$ where $\breve{P}_x$ and $\breve{Q}_x$ 
are coprime and
\beas
\fl\breve{P}_x(X)=X^2+\left(-\frac{1-2V}{1-V}(\xc+\xd+\xe)+\xb-\frac{V}{1-2V}
\xa\right)X+\frac{1-2V}{1-V}\xc\xe,\\
\fl\breve{Q}_x(X)=V\left(X+\xc+\xd+\xe+\frac{1-V}{V}\xb-\frac{1-V}{1-2V}\xa\right).
\eeas
Since $\breve{Q}_x/\breve{P}_x$ and $\breve{Q}/\breve{P}$ are irreducible and 
the leading coefficients of $\breve{P}_x$ and $\breve{P}$ are identical, as well 
as those of $\breve{Q}_x$ and $\breve{Q}$, $F_x=F$ only if 
$\breve{P}_x=\breve{P}$ and $\breve{Q}_x=\breve{Q}$. Hence, $F_x=F$ if and only 
if
\bse
\bea
\fl V\left(\frac{1-V}{1-2V}\xa\right)^2+((1-V)(\xd+\xe)\xb+V\xc\xe)\nonumber\\
-(V(\xc+\xd+\xe)+(1-V)\xb)\frac{1-V}{1-2V}\xa=0,\\
\xc\xe=ce,\\
\fl-\frac{1-2V}{1-V}(\xc+\xd+\xe)+\xb\nonumber\\
-\frac{V}{1-2V}\xa=-\frac{1-2V}{1-V}(c+d+e)+\xb-\frac{V}{1-2V}a,\label{eqnc3}\\
\fl V(\xc+\xd+\xe)+(1-V)\xb\nonumber\\
-\frac{V(1-V)}{1-2V}\xa=V(c+d+e)+(1-V)b-\frac{V(1-V)}{1-2V}a.\label{eqnc4}
\eea
\ese
Hence, doing \eref{eqnc4}$-(1-V)$\eref{eqnc3} to replace \eref{eqnc3}, followed 
by $(1-V)$\eref{eqnc4}-$V$\eref{eqnc3} to replace \eref{eqnc4}, we get
\bse
\bea
\fl V\left(\frac{1-V}{1-2V}\xa\right)^2+((1-V)(\xc+\xe)\xb+V\xd\xe)\nonumber\\
-(V(\xc+\xd+\xe)+(1-V)\xb)\frac{1-V}{1-2V}\xa=0,\label{eqqnc1}\\
\xc\xe=ce,\\
\xc+\xd+\xe=c+d+e,\\
(1-2V)\xb-V\xa=(1-2V)b-Va.\label{eqqnc4}
\eea
\ese
Replacing $\xa$ in \eref{eqqnc1} by $a+\frac{1-2V}{V}(\xb-b)$ given by 
\eref{eqqnc4}, we get
\bse
\bea
(\xc+h)\xb=\frac{V}{1-V}(h^2+(\xc+\xd+\xe)h+\xc\xe),\label{eqqqnc1}\\
\xc\xe=ce,\\
\xc+\xd+\xe=c+d+e,\\
(1-2V)\xb-V\xa=(1-2V)b-Va,
\eea
\ese
where $h=\frac{1-V}{V}b-\frac{1-V}{1-2V}a$. Then, replacing $\xc+\xd+\xe$ 
by $c+d+e$ and $\xc\xe$ by $ce$ in \eref{eqqqnc1}, we get
\bse\label{eqqqqnc}
\bea
\left(\xc+\frac{1-V}{V}b-\frac{1-V}{1-2V}a\right)\xb= 
\left(c+\frac{1-V}{V}b-\frac{1-V}{1-2V}a\right)b,\label{eqqqqnc1}\\
\xc\xe=ce,\label{eqqqqnc2}\\
\xc+\xd+\xe=c+d+e,\label{eqqqqnc3}\\
(1-2V)\xb-V\xa=(1-2V)b-Va.\label{eqqqqnc4}
\eea
\ese
The set of solutions $(\xa,\xb,\xc,\xd,\xe)$ to \eref{eqqqqnc} is a curve given 
by the intersection of two cylindrical hypersurfaces over hyperbolas 
(\ref{eqqqqnc1},~\ref{eqqqqnc2}) and two hyperplanes 
(\ref{eqqqqnc3},~\ref{eqqqqnc4}). If $c+\frac{1-V}{V}b-\frac{1-V}{1-2V}a=0$, the 
cylindrical hypersurface given by \eref{eqqqqnc1} degenerates into two 
hyperplanes whose respective equations are 
$\xb=0$ and $\xc=c$ so that the set of solutions of \eref{eqqqqnc} is composed 
of the line of equations $(1-2V)\xb-V\xa=(1-2V)b-Va,\xc=c,\xd=d,\xe=e$ and the 
hyperbola of equations $\xa=0,\xb=0,\xc+\xd+\xe=c+d+e,\xc\xe=ce$. Considering 
the inverse problem, we thus have the following lemma 
\begin{lemma}
If $V<1/2$ and the polynomials $P$ and $Q$ are not coprime
\beas
\fl P(X)=X^3+(a+b+c+d+e)X^2+(ac+ad+ae+bd+be+ce)X+ace,\\
\fl Q(X)=VX^2+(V(c+d+e)+(1-V)b)X+(Vce+(1-V)(d+e)b),
\eeas
then, if $c+\frac{1-V}{V}b-\frac{1-V}{1-2V}a=0$, the set of solutions 
$(\xa,\xb,\xc,\xd,\xe)$ to the inverse problem is given by the open half-line of 
$\Rbb^5$ of parametric equation
\bes
t\mapsto\left(a+(1-2V)t,b+Vt,c,d,e\right),\qquad 
t\in\left(-\frac{b}{V},+\infty\right).
\ees
Otherwise, the set of solutions $(\xa,\xb,\xc,\xd,\xe)$ to the inverse problem 
is given by the curve of $\Rbb^5$ of parametric equation
\bes
\fl t \mapsto 
\left(a+\frac{(1-2V)(c-t)b}{Vt+(1-V)b-\frac{V(1-V)}{1-2V}a},\frac{c+\frac{1-V}{V
}b-\frac{1-V}{1-2V}a}{t+\frac{1-V}{V}b-\frac{1-V}{1-2V}a}b,t,c+d+e-t-\frac{ce}{t
},\frac{ce}{t}\right),
\ees
where
\bes
\fl t\in
\cases{\left(\beta^-,\min\left(\beta^+,\frac{1-V}{1-2V}a+\frac{(1-2V)bc}{
(1-2V)b-Va}\right)\right),&\mbox{if }$Va-(1-2V)b<0$,\\
\left(\max\left(\frac{1-V}{1-2V}a-\frac{1-V}{V}b,\beta^-\right),\beta^+\right),
&\mbox{if }$0\leq Va-(1-2V)b<(1-2V)Vc$,\\
\left(\beta^-,\min\left(\beta^+,\frac{1-V}{1-2V}a-\frac{1-V}{V}b\right)\right),
&\mbox{if }$(1-2V)Vc<Va-(1-2V)b$,}
\ees
where
\bes
\beta^\pm=\frac{c+d+e\pm\sqrt{(c+d+e)^2-4ce}}{2}.
\ees
\end{lemma}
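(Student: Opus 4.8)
The plan is to solve the reduced system \eref{eqqqqnc} under the positivity requirement $(\xa,\xb,\xc,\xd,\xe)\in{\Rbb_+^*}^5$ that the inverse problem imposes, by parametrising its solution set through $t=\xc$. With $t=\xc$, equations \eref{eqqqqnc2} and \eref{eqqqqnc3} give at once $\xe=ce/t$ and $\xd=(c+d+e)-t-ce/t$. In the non-degenerate case $c+h\neq0$, where $h=\frac{1-V}{V}b-\frac{1-V}{1-2V}a$, equation \eref{eqqqqnc1} is $(t+h)\xb=(c+h)b$ and hence $\xb=\frac{(c+h)b}{t+h}$; then \eref{eqqqqnc4} yields $\xa=a+\frac{1-2V}{V}(\xb-b)=a+\frac{(1-2V)(c-t)b}{V(t+h)}$, using $V(t+h)=Vt+(1-V)b-\frac{V(1-V)}{1-2V}a$. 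Substituting this five-tuple back into \eref{eqqqqnc} shows it solves the system for every $t$ with $t+h\neq0$, so it is exactly the parametrisation claimed; the problem thus reduces to determining the set of $t$ for which all five coordinates are strictly positive.

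The degenerate case $c+h=0$ I would dispose of first. Then \eref{eqqqqnc1} becomes $(\xc-c)\xb=0$, and $\xb>0$ forces $\xc=c$, whence $\xe=e$ and $\xd=d$; the remaining freedom is the line \eref{eqqqqnc4}, which I parametrise by $(\xa,\xb)=(a+(1-2V)t,\,b+Vt)$. Since $V<1/2$ gives $1-2V>0$ and $V>0$, positivity reads $t>-a/(1-2V)$ and $t>-b/V$, while the relation $c+h=0$ rewrites as $\frac{a}{1-2V}-\frac{b}{V}=\frac{c}{1-V}>0$, so $-b/V>-a/(1-2V)$ and the second bound binds, giving the half-line $t\in(-b/V,+\infty)$ of the statement.

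For the non-degenerate case the positivity analysis decouples. The condition $\xd>0$ is $t^2-(c+d+e)t+ce<0$, i.e. $t\in(\beta^-,\beta^+)$ with $\beta^\pm$ the roots of $X^2-(c+d+e)X+ce$; these are real, distinct and positive since $(c+d+e)^2-4ce=(-c+d+e)^2+4cd>0$, and as $\beta^->0$ the conditions $\xc=t>0$ and $\xe=ce/t>0$ then hold for free. It remains to intersect $(\beta^-,\beta^+)$ with $\{\xb>0\}\cap\{\xa>0\}$. Here $\xb=\frac{(c+h)b}{t+h}$ changes sign only at the pole $t=-h=\frac{(1-V)(Va-(1-2V)b)}{V(1-2V)}$, its sign on a given side being fixed by that of $c+h$; and $\xa=\frac{N(t)}{V(t+h)}$ with $N(t)=aV(t+h)+(1-2V)(c-t)b$ an affine function of slope $Va-(1-2V)b$ whose single zero is $t^\star=\frac{1-V}{1-2V}a+\frac{(1-2V)bc}{(1-2V)b-Va}$.

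The main obstacle is the bookkeeping that assembles these one-sided bounds into the three stated windows, and it hinges on two genuine transitions. First, the slope $Va-(1-2V)b$ of $N$ changes sign at $Va-(1-2V)b=0$, which flips the $\xa$-constraint between an upper cut $t<t^\star$ and a lower cut $t>t^\star$; this is the boundary between the first regime, where the feasible set is $\bigl(\beta^-,\min(\beta^+,t^\star)\bigr)$, and the rest. Second, the sign of $c+h$ flips at $c+h=0$, which flips the $\xb$-constraint between $t>-h$ and $t<-h$; this is the boundary between the middle regime $\bigl(\max(\beta^-,-h),\beta^+\bigr)$ and the last regime $\bigl(\beta^-,\min(\beta^+,-h)\bigr)$. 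The delicate steps are to check that, on each side of the pole, $\{\xa>0\}$ and $\{\xb>0\}$ share the same active bound—which amounts to comparing $t^\star$ with $-h$ through the sign of $V(1-2V)c-(1-V)(Va-(1-2V)b)$, precisely the quantity that vanishes at $c+h=0$—and to verify that the resulting $\min$/$\max$ expressions coincide, after rewriting $t^\star$ and $-h=\frac{1-V}{1-2V}a-\frac{1-V}{V}b$ in the original parameters, with those in the statement. Once these sign comparisons at the two transition values are settled, the remaining verifications are routine, the structural reduction and the parametrisation being the easy part.
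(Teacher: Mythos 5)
Your proposal is correct, and where it overlaps with the paper it follows the same route; the interest lies in how the coverage differs. The paper's own proof spends essentially all of its effort on the reduction you take for granted --- showing that $F_x=F$ forces $P_x$ and $Q_x$ to be non-coprime, equating $\breve{P}_x=\breve{P}$ and $\breve{Q}_x=\breve{Q}$, and row-reducing to \eref{eqqqqnc} --- and then passes to the lemma with the single phrase ``considering the inverse problem'', i.e.\ it never writes down the positivity analysis at all. You do the opposite: you cite \eref{eqqqqnc} and carry out precisely the omitted half. Your bookkeeping is sound: writing $h=\frac{1-V}{V}b-\frac{1-V}{1-2V}a$ and $N(t)=(Va-(1-2V)b)t+Vah+(1-2V)cb$ for the numerator of $\xa$, the identities $N(c)=Va(c+h)$ and $N(-h)=(1-2V)b(c+h)$ confirm your claim that on the feasible side of the pole $t=-h$ the $\xa$-bound is inactive in the two regimes with $Va-(1-2V)b\geq0$, and your expression $t^\star=\frac{1-V}{1-2V}a+\frac{(1-2V)bc}{(1-2V)b-Va}$ for the zero of $N$ is exactly the bound appearing in the first window. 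Moreover, your route yields a concrete payoff that you should state explicitly rather than file under ``routine verification'': the transition between the middle and last windows occurs exactly at $c+h=0$, i.e.\ at $(1-V)\left(Va-(1-2V)b\right)=V(1-2V)c$, whereas the lemma as printed places it at $Va-(1-2V)b=(1-2V)Vc$. These differ by a factor $1-V$, so the printed case conditions carry a misprint; your boundary is the right one, since it makes the excluded boundary value coincide with the degenerate half-line case (as consistency with the first part of the lemma demands), while the windows themselves are as stated. In short: same method, complementary halves --- the paper details the algebraic reduction and asserts the ranges, you assume the reduction and derive the ranges, and your half is the one that detects the error in the statement.
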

\end{document}